\documentclass[12pt,journal,onecolumn]{IEEEtran}
\usepackage{amsfonts,color,morefloats,pslatex}
\usepackage{amssymb,amsthm, amsmath,latexsym}
\usepackage{makecell}

\newtheorem{theorem}{Theorem}
\newtheorem{lemma}[theorem]{Lemma}

\newtheorem{corollary}[theorem]{Corollary}

\newtheorem{example}[theorem]{Example}

\newtheorem{remark}[theorem]{Remark}

\newcommand{\C}{{\mathcal{C}}}

\usepackage{blindtext}
\usepackage{float}

\ifCLASSINFOpdf

\else

\fi

\UseRawInputEncoding

\begin{document}

\title{Non-RS type cyclic MDS codes over finite fields via cyclotomic field reduction  \thanks{The research of C. Xiang was supported by the Basic and Applied Basic Research Foundation of Guangdong Province  of China under Grant number 2026A1515011223 and the National Natural Science Foundation of China under grant number 12171162.
The research of C. Tang  was supported by the National Natural Science
Foundation of China under grant number 12231015, the Sichuan Provincial Youth Science and Technology
Fund under grant number 2022JDJQ0041 and the Fundamental Research Funds for the Central Universities
under grant number 2682023CX079.
}
}

\author{
Can Xiang  \thanks{C. Xiang is with the College of Mathematics and Informatics, South China Agricultural University, Guangzhou, Guangdong 510642, China (email:cxiangcxiang@hotmail.com). }
and  Chunming Tang \thanks{C. Tang is with the School of Information Science and Technology, Southwest Jiaotong University, Chengdu, Sichuan 610031, China(email: tangchunmingmath@163.com).}

}

\maketitle

\begin{abstract}
Cyclic maximum distance separable (MDS for short) codes are a special subclass of linear codes and have received a lot of attention, as these codes have very important applications in many areas including quantum codes, designs and finite geometry. However, the existing construction methods for cyclic MDS codes are mainly focused on strict restrictions on certain parameters or are relatively complex in their construction approaches. In this paper, we investigate this approach further via norm reduction in cyclotomic fields. By converting the verification of the MDS property over a finite field into checking non-zero minors in characteristic zero, we propose a construction method of cyclic MDS codes over finite fields via cyclotomic field reduction. Based on this method, we obtain several cyclic MDS codes over finite fields and many non-RS type cyclic MDS codes are produced.
Compared with the existing construction methods, our method is relatively simpler. Moreover, the results of this paper show that the parameters of the obtained non-RS cyclic MDS codes are flexible.

\end{abstract}

\begin{IEEEkeywords}
Linear codes, \and Cyclic codes, \and MDS codes, \and Cyclotomic fields
\end{IEEEkeywords}

\IEEEpeerreviewmaketitle

\section{Introduction}\label{Sec-introduct}

Let $p$ be a prime and $q = p^m$ for some positive integer $m$. Let $\mathbb{F}_q$ be the finite field
of cardinality $q$. An $[n,\, k,\,d]$ linear code $\C$ over $\mathbb{F}_q$ is a $k$-dimensional subspace of $\mathbb{F}_q^n$ with minimum (Hamming) distance $d$. An $[n,\, k,\,d]$ linear code $\C$ is said to be {\em cyclic} if $(c_0,c_1, \cdots, c_{n-1}) \in \C$ implies $(c_{n-1}, c_0, c_1, \cdots, c_{n-2}) \in \C$. An $[n,\, k,\,d]$ linear code $\C$ is said to be maximum distance separable (MDS for short) if its parameters reach the Singleton bound,  i.e., $d=n-k + 1$. It is well known that the dual of an MDS code is also an MDS code.

It is well known that an MDS code is a special linear code and has very important applications in cryptography, designs, finite geometry and so on (see, for example, \cite{ma2003,dingwang2005}). Thus the study on MDS codes has attracted much attention. Some properties of MDS codes have been well investigated and some MDS codes were obtained by various methods (see, for example, \cite{liu2021,liu2026,roth1989,kok2016,ma1993,sui2022,br1999,dsdy2013,dau2014,heng2024,chen2025,jin2026,bee2017,bee2022,Yan2025,Wang2024,Chen2018}). It is notice that any $[n,k]$ MDS code is equivalent to a Reed-Solomon (RS for short) code for $k < 3$ and $n-k< 3$ \cite{bee2017} and the dual of a RS code is also RS code. If an MDS code is not equivalent to a RS code, this MDS code is generally called non-RS type MDS code. Thus, it is interesting and important to construct non-RS type MDS codes in coding theory. However, there are nuch works focus on restrictionism certain parameters for constructing non-RS type MDS codes in the literature. For example, Roth and Lempel \cite{roth1989} constructed a class of non-RS MDS codes for even $q$ and $3\leq k \leq q/2-1$ (or $3\leq k \leq q-1$) by using generator polynomials and subsets of the finite field $\mathbb{F}_q$. Recently, Chen \cite{chen2024} constructed some non-RS MDS codes using algebraic curves.
Afterward, Li et al.\cite{chen2025} obtained some cyclic MDS codes by solving systems of polynomial equations and judged whether these cyclic MDS codes are
non-RS by Lemma \ref{judg} and computing the dimensions of the Schur squares of these codes. Very recently, Jin et al.\cite{jin2026} proposed a method for constructing non-RS MDS codes by selecting appropriate polynomials and point sets, and the parameters of these codes are more flexible compared to those of \cite{chen2024,chen2025}. In fact, these construction methods are not only relatively complex in computation but also impose strict restrictions on certain parameters. Motivated by these facts, we investigate this idea further by using
norm reduction in cyclotomic fields and present a construction method of cyclic MDS codes over finite fields via cyclotomic field reduction. We transform the problem of verifying the MDS property over a finite field into a problem of determining non-zero minors in characteristic zero. Compared with existing construction methods, the method presented in this paper is relatively simple. In particular, the results of this paper show that the parameters of non-RS cyclic MDS codes are flexible.

\begin{lemma} \cite{chen2025}\label{judg}
Let $\C\subseteq \mathbb{F}_q ^{~n}$ be an MDS code  with length $n$ and dimension $k\leq(n-1)/2$ . Then the code $\C$ is GRS if and only if the dimension of the Schur square of the code $\C$ is $2k-1$, i.e., $dim(\C ^2)=2k-1$.
\end{lemma}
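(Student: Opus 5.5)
The proof splits into the two directions, and the substantive one is the converse. Throughout, $\C^2$ is the span of all componentwise products $c\ast c'$ with $c,c'\in\C$. Since Schur squares behave well under monomial equivalence, ``GRS'' will be read up to equivalence. That is, $\C=\{(v_1f(\alpha_1),\dots,v_nf(\alpha_n)):\deg f<k\}$ with distinct $\alpha_i$ and nonzero $v_i$.

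\emph{Forward direction.} Suppose $\C$ is GRS with data $(\alpha,v)$. A product of two codewords has the form $(v_i^2 f(\alpha_i)g(\alpha_i))_i$, and $fg$ ranges over polynomials of degree $\le 2k-2$. Conversely, the products $x^ax^b$ span all monomials of degree $\le 2k-2$. Hence $\C^2=GRS_{2k-1}(\alpha,v^2)$. Because $2k-1\le n$, evaluation at $n$ distinct points is injective on polynomials of degree $<2k-1$, so $\dim \C^2=2k-1$.

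\emph{Converse: lower bound.} Assume $\dim\C^2=2k-1$. First I will show that $2k-1$ is the minimum possible value, using the standard Cauchy--Davenport type bound for MDS codes: $\dim(\C^2)\ge \min\{n,2k-1\}$. To see this, take a generator matrix in systematic form on coordinates $1,\dots,k$, with rows $g_1,\dots,g_k$. Choose another information set $\{k,\dots,2k-1\}$; it is an information set because $\C$ is MDS and $2k-1\le n$. Take the systematic rows $h_k,\dots,h_{2k-1}$ for it. The $2k-1$ products $g_1\ast g_k,\dots,g_{k-1}\ast g_k$ and $h_k\ast h_k,\dots,h_{2k-1}\ast h_{2k-1}$ are supported in a triangular pattern on coordinates $1,\dots,2k-1$. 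The MDS property forces the relevant diagonal entries to be nonzero, so these products are independent. Thus equality means $\C^2$ is extremal.

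\emph{Converse: rational normal curve.} Next I convert extremality into geometry. Let $P_1,\dots,P_n\in\mathbb{P}^{k-1}$ be the columns of a generator matrix. The MDS property says these points are in general position, i.e., any $k$ are independent. The condition $\dim\C^2=2k-1$ says the space of quadrics through all $P_i$ has dimension $\binom{k+1}{2}-(2k-1)=\binom{k-1}{2}$. This is exactly the dimension of the space of quadrics containing a rational normal curve. Since $n\ge 2k+1$, Castelnuovo's lemma applies: $n\ge 2k+1$ points in general position in $\mathbb{P}^{k-1}$ that impose only $2k-1$ conditions on quadrics lie on a unique rational normal curve. That curve is parametrized by $(1:t:\dots:t^{k-1})$, possibly with one point at $t=\infty$. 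So the columns are $v_i(1,\alpha_i,\dots,\alpha_i^{k-1})^T$, and $\C$ is (doubly-extended) GRS.

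The main obstacle is the Castelnuovo step over an arbitrary finite field, including small $q$. I would try to prove it directly and inductively. Project from $k-2$ of the points down to $\mathbb{P}^1$, or use induction on $k$ by puncturing and shortening, which preserve both the MDS property and extremality. This reduces to $k=2$, where every MDS code is GRS, and the Vandermonde structure is then glued back together using the overlap of many information sets. If that proves too technical, I would cite the known result of Mirandola--Z\'emor that MDS codes with $k\le (n-1)/2$ and $\dim\C^2=2k-1$ are GRS.
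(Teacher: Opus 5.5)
The paper gives no proof of this lemma. It is quoted from \cite{chen2025} and only used in Theorem~\ref{non-rs}, so there is nothing to match. Your outline is the standard route: the converse is Castelnuovo's lemma applied to the columns of a generator matrix, which is equivalent to the Vosper-type theorem of Mirandola--Z\'emor. Your forward direction is correct, and so is the translation of $\dim\C^2=2k-1$ into ``the $P_i$ impose only $2k-1$ conditions on quadrics''. Two steps need repair, and one caveat should be made explicit.

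\textbf{The lower-bound argument is false as written.} For $i<k$ the rows $g_i,g_k$ restrict to $e_i,e_k$ on coordinates $1,\dots,k$, so every $g_i\ast g_k$ vanishes there. A codeword with $k-1$ zeros has no further zeros, so all these products have the same support $\{k+1,\dots,n\}$. Meanwhile each $h_j\ast h_j$ is nonzero on all of $1,\dots,k-1$. So there is no triangular pattern.

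On coordinates $1,\dots,2k-1$ your matrix is only block triangular. Up to nonzero factors, its determinant is $\det[h_j(a)^2]_{k<j\le 2k-1,\;1\le a\le k-1}$, the entrywise square of a nonsingular matrix. This can vanish in odd characteristic. For example, for the $[7,3]$ RS code over $\mathbb{F}_7$ with evaluation points ordered $2,3,4,0,1,5,6$, one gets $h_4(1)h_5(2)=-h_4(2)h_5(1)$.

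A working choice is $g_1\ast g_1,\dots,g_k\ast g_k$ together with $g_1\ast g_k,\dots,g_{k-1}\ast g_k$. On coordinates $1,\dots,2k-1$ these give $\begin{pmatrix} I_k & \ast\\ 0 & M\end{pmatrix}$ with $M=[g_i(c)g_k(c)]$, $i<k<c\le 2k-1$. Here $M$ is a square submatrix of the redundancy part of the systematic generator matrix, which is nonsingular by MDS, times a diagonal matrix with nonzero entries. In any case the step is dispensable, since Castelnuovo's lemma only uses the equality $\dim\C^2=2k-1$.

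\textbf{The converse rests on Castelnuovo's lemma over $\mathbb{F}_q$, which you leave open.} Your inductive fallback does not work as stated. Shortening at $i$ only gives $\dim(\C_i)^2\le 2k-2$, because $(\C_i)^2$ lies in the part of $\C^2$ vanishing at $i$. Getting $2k-3$ means showing that projection from $P_i$ keeps the points on a rational normal curve, and that is essentially the content of the lemma. Citing Mirandola--Z\'emor just cites the statement.

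The finite-field worry is settled by descent, with no condition on $q$:
\begin{itemize}
\item Over $\overline{\mathbb{F}}_q$, general position and the rank $\dim\C^2$ are unchanged, so the classical lemma gives a rational normal curve $X$ through the $P_i$.
\item The quadrics through $X$ and those through the $P_i$ both form spaces of dimension $\binom{k-1}{2}$, one inside the other, so they coincide.
\item Since $X$ is cut out by its quadrics, it is the base locus of a linear system defined over $\mathbb{F}_q$, hence is defined over $\mathbb{F}_q$.
\item $X$ has genus $0$ and rational points, so it is $\mathbb{F}_q$-isomorphic to $\mathbb{P}^1$, embedded by the complete linear system of degree $k-1$.
\item This gives columns $v_i(1,\alpha_i,\dots,\alpha_i^{k-1})^T$ with distinct $\alpha_i\in\mathbb{F}_q\cup\{\infty\}$.
\end{itemize}

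\textbf{Spell out the doubly-extended case.} If $n\le q$, a M\"obius change of parameter sends an unused point of $\mathbb{P}^1(\mathbb{F}_q)$ to $\infty$ and gives a genuine GRS code. For $n=q+1$, however, the doubly-extended RS code satisfies all the hypotheses and is not GRS in the strict sense. So the lemma must be read with GRS including doubly-extended codes, or with $n\le q$.
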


The rest of this paper is arranged as follows. Section \ref{sec-pre} introduces some notation and results
related to cyclotomic fields which will be needed in subsequent sections. Section \ref{sec-codes} gives a construction method of cyclic MDS codes over finite fields via cyclotomic field reduction and derived many non-RS cyclic MDS codes. Section  \ref{sec-summary} concludes this paper and makes concluding remarks.


\section{Preliminaries}\label{sec-pre}

In this section, we briefly recall some results on cyclotomic fields, ring of integers, residue fields and reduction homomorphism. These results will be used later in this paper.

Let $n$ be a positive integer and  $\zeta_n = e^{2\pi i/n}$  be a primitive $n$-th root of unity over the complex numbers $\mathbb{C}$. The field $K = \mathbb{Q}(\zeta_n)$ is called the $n$-th cyclotomic field. It is a finite extension of  $\mathbb{Q}$ with degree $[K:\mathbb{Q}] = \varphi(n)$, where $\varphi$ is Euler's totient function. Note that the element $\zeta_n$ is an algebraic integer, as it is a root of the monic polynomial $x^n-1=0$ .

The set of all algebraic integers in $K$ forms a ring, which is called the ring of integers of $K$ and denoted by $\mathcal{O}_K$. For the cyclotomic field $\mathbb{Q}(\zeta_n)$, the corresponding ring of integers is $\mathcal{O}_K = \mathbb{Z}[\zeta_n]$.

Let $p$ be a prime. Then a ideal over $\mathcal{O}_K$ can be generated by $p$, which is denoted as $p\mathcal{O}_K$. However, the ideal $p\mathcal{O}_K$ is possibly not prime and can be decomposed into a product of prime ideals, i.e.,
\begin{eqnarray}\label{eq-primp}
p\mathcal{O}_K= \mathfrak{p}_1^{e_1}\mathfrak{p}_2^{e_2}\cdots \mathfrak{p}_g ^{e_g}
\end{eqnarray}
where each $\mathfrak{p}_i$ is a prime ideal over $\mathcal{O}_K$ and each $e_i$ is a positive integer.

For any prime ideal $\mathfrak{p} \subset \mathcal{O}_K$ in (\ref{eq-primp}), $\mathfrak{p} \bigcap \mathbb{Z} =p \mathbb{Z}$ (i.e., $\mathfrak{p} $ lying over $p$) and the quotient ring $\mathbb{F}{\mathfrak{p}} = \mathcal{O}_K / \mathfrak{p}$ is a field and called the residue field of $\mathfrak{p}$. This field has characteristic $p$ and is a finite extension of $\mathbb{F}p$. Let $f = [\mathbb{F}{\mathfrak{p}} : \mathbb{F}p]$ denote the degree of this extension.  Then we have $\mathbb{F}{\mathfrak{p}} \cong \mathbb{F}_{p^f}$ and the following property in Lemma \ref{lem-CF}.

\begin{lemma} \label{lem-CF}
Let $p$ be a prime with $p \nmid n$. Let $f$ be the multiplicative order of $p$ modulo $n$. Then, the ideal $p\mathcal{O}_K$ can be decomposed into a product of $\varphi(n)/f$ distinct prime ideals in the cyclotomic field $K = \mathbb{Q}(\zeta_n)$. Furthermore, the residue field degree is $f$ for each such prime ideal $\mathfrak{p}$, which means that $\mathcal{O}_K/\mathfrak{p} \cong \mathbb{F}_{p^f}$.
\end{lemma}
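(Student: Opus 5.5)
The plan is the standard Dedekind–Kummer route. We know $\mathcal{O}_K=\mathbb{Z}[\zeta_n]\cong \mathbb{Z}[x]/(\Phi_n(x))$, where $\Phi_n$ is the $n$-th cyclotomic polynomial, monic of degree $\varphi(n)$ and irreducible over $\mathbb{Q}$. Since $\mathcal{O}_K$ is monogenic, Kummer's theorem applies to every prime $p$ with no index obstruction. Concretely, if
\[
\overline{\Phi_n}(x)=\prod_{i=1}^g \overline{g_i}(x)^{e_i}
\]
is the factorization into distinct monic irreducibles in $\mathbb{F}_p[x]$, then $p\mathcal{O}_K=\prod_i \mathfrak{p}_i^{e_i}$ with $\mathfrak{p}_i=(p,g_i(\zeta_n))$. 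Moreover $\mathcal{O}_K/\mathfrak{p}_i\cong \mathbb{F}_p[x]/(\overline{g_i})$, so the residue degree of $\mathfrak{p}_i$ is $\deg \overline{g_i}$. I would either cite this theorem or prove it directly: use $\mathcal{O}_K/p\mathcal{O}_K\cong \mathbb{F}_p[x]/(\overline{\Phi_n})$ together with the Chinese Remainder Theorem, then match maximal ideals.

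So everything reduces to factoring $\overline{\Phi_n}$ over $\mathbb{F}_p$ when $p\nmid n$. The first step is separability. Since $p\nmid n$, the derivative of $x^n-1$ is $nx^{n-1}$, which is coprime to $x^n-1$ in $\mathbb{F}_p[x]$. Hence $x^n-1$ is squarefree mod $p$, and so is its divisor $\overline{\Phi_n}$. This gives all $e_i=1$, so the primes are distinct.

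The second step is to identify the roots. In a splitting field, the roots of $\overline{\Phi_n}$ are exactly the primitive $n$-th roots of unity in $\overline{\mathbb{F}_p}$. To see this, note that $x^n-1=\prod_{d\mid n}\Phi_d$ holds over $\mathbb{Z}$ and reduces mod $p$. Since the reduced product is squarefree, each root $\beta$ of $x^n-1$ lies in exactly one $\overline{\Phi_d}$. By induction on $d$, $\overline{\Phi_d}$ must collect precisely the roots of exact order $d$.

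The third step computes the degrees. Let $\beta$ be a primitive $n$-th root of unity in $\overline{\mathbb{F}_p}$. The degree of its minimal polynomial over $\mathbb{F}_p$ equals the size of its Frobenius orbit $\{\beta,\beta^p,\beta^{p^2},\dots\}$. This is the least $t$ with $\beta^{p^t}=\beta$, i.e.\ $n\mid p^t-1$, so $t=f$, the multiplicative order of $p$ mod $n$. Equivalently, $\mathbb{F}_p(\beta)=\mathbb{F}_{p^f}$ because $\mathbb{F}_{p^t}^\times$ is cyclic of order $p^t-1$.

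Thus every irreducible factor $\overline{g_i}$ has degree $f$, and there are $g=\varphi(n)/f$ of them. Kummer's theorem then gives exactly $\varphi(n)/f$ distinct primes, each with $\mathcal{O}_K/\mathfrak{p}\cong \mathbb{F}_{p^f}$.

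I expect the main obstacle to be justifying the Kummer step, specifically identifying $\mathcal{O}_K$ with $\mathbb{Z}[\zeta_n]$. The paper already asserts this identification, so I would take it as given and spend the effort on the CRT matching of prime ideals. The Galois-theoretic consistency check $efg=\varphi(n)$ then follows automatically.
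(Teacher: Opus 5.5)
The paper gives no proof of this lemma: it is recalled in the preliminaries as a standard fact about how primes split in cyclotomic fields, so there is no argument of the authors' to compare yours against. Your Dedekind--Kummer argument is the standard textbook proof, and it is correct and complete. Separability of $x^n-1$ modulo $p$ forces every $e_i=1$. The roots of $\overline{\Phi_n}$ are exactly the elements of order $n$ in $\overline{\mathbb{F}_p}^{\times}$. The Frobenius orbit of such an element has length equal to the order $f$ of $p$ modulo $n$. Hence $\overline{\Phi_n}$ factors into $\varphi(n)/f$ distinct irreducibles, each of degree $f$.

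One remark on what you call the main obstacle: you do not need the global identification $\mathcal{O}_K=\mathbb{Z}[\zeta_n]$. Kummer's theorem at $p$ only requires $p\nmid[\mathcal{O}_K:\mathbb{Z}[\zeta_n]]$. The same holds for your CRT version, which uses $\mathbb{Z}[\zeta_n]/p\mathbb{Z}[\zeta_n]\cong\mathcal{O}_K/p\mathcal{O}_K$. Your separability step already supplies this condition. Since $\overline{\Phi_n}$ is squarefree over $\mathbb{F}_p$, we get $p\nmid\mathrm{disc}(\Phi_n)$. Moreover $[\mathcal{O}_K:\mathbb{Z}[\zeta_n]]^2$ divides $\mathrm{disc}(\Phi_n)$, so $p$ cannot divide the index. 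So for $p\nmid n$ your argument is self-contained, without invoking the harder theorem that $\mathbb{Z}[\zeta_n]$ is the full ring of integers.

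If you take the CRT route, also state why the exponents are $1$ and not just which maximal ideals occur. One way is that $\mathcal{O}_K/p\mathcal{O}_K\cong\prod_i\mathbb{F}_p[x]/(\overline{g_i})$ is reduced, so $p\mathcal{O}_K=\bigcap_i\mathfrak{p}_i=\prod_i\mathfrak{p}_i$. Alternatively, the count $\sum_i e_i f=\varphi(n)$ with $g=\varphi(n)/f$ forces all $e_i=1$.
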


We remark that the above property guarantees the existence of a primitive $n$-th root of unity in $\mathbb{F}{p^f}$ , since $n \mid p^f-1$ and
the multiplicative group $\mathbb{F}_{p^f} ^\times$ is cyclic of order $p^f-1$. This is the theoretical foundation for reducing $\zeta_n$ to an element of a finite field.

Choosing a prime ideal $\mathfrak{p} \subset \mathcal{O}_K$ for the prime $p$ in Lemma\ref{lem-CF}. There exists a natural reduction homomorphism
$$\rho: \mathcal{O}_K \longrightarrow \mathcal{O}_K/\mathfrak{p} \cong \mathbb{F}_{p^f}.$$
This map sends algebraic integers to elements of a finite field. In particular, $\rho(\zeta_n)$ is a primitive $n$-th root of unity in $\mathbb{F}_{p^f}$.

The following result was described in \cite{zhang2019}, which will be used later in this paper.

\begin{theorem}[Chebotarev's Theorem ]\label{thm:chebotarev}
Let the symbols and notations be as above. Let $n$ be a prime number. Define the $n \times n$ fourier matrix $V$ by $(i, j)$-entry
\[
V_{i,j} = \zeta_n ^{\,ij}, \qquad 0 \le i,j \le n-1.
\]
Then all square submatrix of $V$ have non-zero determinant.
\end{theorem}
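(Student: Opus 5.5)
The plan is to follow the classical reduction-mod-$n$ argument, usually attributed to Chebotarev and streamlined by Frenkel. Fix row indices $0\le i_1<\dots<i_k\le n-1$ and column indices $0\le j_1<\dots<j_k\le n-1$. Consider the polynomial determinant
\[
D(x_1,\dots,x_k)=\det\bigl(x_s^{\,i_r}\bigr)_{1\le r,s\le k}\in\mathbb{Z}[x_1,\dots,x_k].
\]
The minor in question is $D(\zeta_n^{j_1},\dots,\zeta_n^{j_k})$. Since $D$ is alternating, it factors as
\[
D=\prod_{s<t}(x_t-x_s)\,F(x_1,\dots,x_k),
\]
with $F\in\mathbb{Z}[x_1,\dots,x_k]$ a (generalized Schur) polynomial. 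Because the $\zeta_n^{j_s}$ are distinct, the Vandermonde factor is nonzero. It therefore suffices to show that $F(\zeta_n^{j_1},\dots,\zeta_n^{j_k})\neq 0$.

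The key device is the prime ideal above $n$. Since $n$ is prime, in $\mathcal{O}_K=\mathbb{Z}[\zeta_n]$ the ideal $(1-\zeta_n)$ is prime, lies over $n$, and has residue field $\mathbb{F}_n$. Under reduction modulo $(1-\zeta_n)$ every $\zeta_n^{j}$ maps to $1$. Hence if $F(\zeta_n^{j_1},\dots,\zeta_n^{j_k})=0$, then $F(1,\dots,1)\equiv 0 \pmod n$. So it is enough to prove that $F(1,\dots,1)$ is an integer not divisible by $n$.

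To compute $F(1,\dots,1)$, I substitute $x_s=1+y_s$ (or use the standard formula for Schur functions at $1$): evaluating the ratio of alternants gives
\[
F(1,\dots,1)=\frac{\prod_{r<s}(i_s-i_r)}{\prod_{r<s}(s-r)}.
\]
One can see this directly by applying the operator $\prod_{s}\bigl(x_s\partial_{x_s}\bigr)^{s-1}$ to both sides of the factorization and setting all $x_s=1$. On the left this yields the Vandermonde $\det(i_r^{\,s-1})$ in the exponents. On the right only the term in which all derivatives hit the Vandermonde factor survives, and it contributes $\prod_{s}(s-1)!$ times $F(1,\dots,1)$.

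The exponents satisfy $0\le i_r\le n-1$ and are distinct, so each difference $i_s-i_r$ satisfies $0<|i_s-i_r|<n$. The numerator is therefore coprime to $n$, since $n$ is prime. Consequently $F(1,\dots,1)\not\equiv 0 \pmod n$, hence $F(\zeta_n^{j_1},\dots,\zeta_n^{j_k})\notin(1-\zeta_n)$, and in particular it is nonzero. This gives $\det\bigl(\zeta_n^{i_rj_s}\bigr)\ne 0$ for every square submatrix $V$.

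The main obstacle is the evaluation of $F(1,\dots,1)$. The differential-operator bookkeeping has to be done carefully, so that lower-order derivative terms of the Vandermonde factor are seen to vanish at the diagonal point $x_1=\dots=x_k=1$, and so that the integrality of $F$ justifies the reduction modulo $(1-\zeta_n)$.
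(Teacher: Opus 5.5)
Your proof is correct; it is essentially Frenkel's short proof of Chebotarev's theorem. The paper gives no proof of this statement: it quotes the result from the literature and uses it only as a black box in Theorem~\ref{npri-mds-lift}. So there is no competing argument to compare against, and your proposal supplies what the paper omits.

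The steps you flagged do go through.
\begin{itemize}
\item $F\in\mathbb{Z}[x_1,\dots,x_k]$, because each $x_t-x_s$ is monic in $x_t$ and these factors are pairwise non-associate primes.
\item $(1-\zeta_n)\cap\mathbb{Z}=n\mathbb{Z}$, because $\prod_{a=1}^{n-1}(1-\zeta_n^{a})=\Phi_n(1)=n$ and $1-\zeta_n$ is not a unit.
\item For the evaluation, $\theta_s=x_s\partial_{x_s}$ is a derivation, so the Leibniz rule applies to $\prod_s\theta_s^{\,s-1}(VF)$. Writing $y_s=x_s-1$, the Vandermonde factor $V=\prod_{s<t}(y_t-y_s)$ is homogeneous of degree $k(k-1)/2$ in the $y_s$. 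Each $\theta^{\beta}$ is a combination of $x^{\gamma}\partial^{\gamma}$ with $|\gamma|\le|\beta|$, so $\theta^{\beta}V$ vanishes at $y=0$ whenever $|\beta|<k(k-1)/2$.
\end{itemize}
What remains is $\det\bigl(i_r^{\,s-1}\bigr)=\det\bigl((r-1)^{s-1}\bigr)\,F(1,\dots,1)$, that is, $\prod_{r<s}(i_s-i_r)=\prod_{s}(s-1)!\;F(1,\dots,1)$. Hence $n\nmid F(1,\dots,1)$, as you claim.

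Conceptually, your route runs the paper's reduction philosophy in the opposite direction. Theorems~\ref{reduction} and~\ref{mds2} push non-vanishing from characteristic zero down to residue fields at primes outside a finite bad set. You instead pull non-vanishing of the Schur factor up from the residue field $\mathbb{F}_n$ of the totally ramified prime $(1-\zeta_n)$.

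Your argument also gives a by-product relevant to the paper's tables. For prime $n$, every minor has norm $\pm n^{k(k-1)/2}\,N_{K/\mathbb{Q}}\bigl(F(\zeta_n^{j_1},\dots,\zeta_n^{j_k})\bigr)$, and the second factor is prime to $n$. So $n$ enters $P_{\text{bad}}$ only through the Vandermonde factor, and every other bad prime comes from $F$.
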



\section{Non-RS cyclic MDS codes with flexible parameters }\label{sec-codes}

In this section, we  give a construction of cyclic MDS codes over finite fields via cyclotomic field reduction. Based on this construction, many non-RS cyclic MDS codes have been derived.
Before giving and proving the main results of this paper, we firstly
prove a few more auxiliary results which will be used later.

\subsection{Some auxiliary results}

In order to construct MDS codes over finite fields via cyclotomic field reduction in this paper, we need the help of some results that are described and proved in this subsection.

We first demonstrate that MDS codes over finite fields with a specific group structure can be lifted to cyclotomic fields in Theorem \ref{finitelift}.
\begin{theorem} \label{finitelift}
Let $\C'$ be an $[N, k]$ MDS code over a finite field $\mathbb{F}_q$. Let $G' = [a'_{i,j}]$ be a generator matrix of $\C'$. Let $n$ be the order of the multiplicative group generated by all non-zero entries of $G'$. Let $\zeta_n' \in \mathbb{F}_q$ be a primitive $n$-th root of unity, such that any non-zero entry $a'_{i,j}$ can be uniquely expressed as $(\zeta_n')^{e_{i,j}}$ for some integer $e_{i,j}$

Define a matrix $G = [a_{i,j}]$ over the $n$-th cyclotomic field $K = \mathbb{Q}(\zeta_n)$ by the following mapping:
$$
a_{i,j} =
\begin{cases}
0, & \text{if } a'_{i,j} = 0 \text{ in } \mathbb{F}_q, \\
\zeta_n^{e_{i,j}}, & \text{if } a'_{i,j} = (\zeta_n ')^{e_{i,j}} \text{ in } \mathbb{F}_q.
\end{cases}
$$
Then, the code $\C$ generated by the matrix $G$  is an $[N, k]$ MDS code over $K$.
\end{theorem}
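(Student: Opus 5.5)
The plan is to use the standard minor criterion for MDS codes and transport it from $\mathbb{F}_q$ to $K$ through a reduction homomorphism $\rho$ as in Section \ref{sec-pre}. Recall that a $k\times N$ matrix of rank $k$ generates an MDS code if and only if every $k\times k$ submatrix is nonsingular. Indeed, a nonzero codeword $xG$ with at least $k$ zero coordinates would give a nonzero vector $x$ in the left kernel of the corresponding $k\times k$ submatrix, and conversely. Since $\C'$ is MDS with generator matrix $G'$, every $k\times k$ minor $\det G'_S$, for $S\subseteq\{1,\dots,N\}$ with $|S|=k$, is nonzero in $\mathbb{F}_q$. It therefore suffices to show that every $\det G_S$ is nonzero in $K$. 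This also gives $\mathrm{rank}\, G=k$, so $\C$ is an $[N,k]$ code, and the criterion then yields $d=N-k+1$.

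First, note that $n$ divides $|\mathbb{F}_q^\times|=q-1$, so $p\nmid n$ and the setting of Lemma \ref{lem-CF} applies. All entries $a_{i,j}$ are $0$ or powers of $\zeta_n$, so they lie in $\mathcal{O}_K=\mathbb{Z}[\zeta_n]$.

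The key step is to pick the prime ideal $\mathfrak{p}$ so that the reduction satisfies $\rho(\zeta_n)=\zeta_n'$ exactly, not merely some primitive $n$-th root of unity. I would do this directly. Consider the ring homomorphism $\mathbb{Z}[x]\to\mathbb{F}_q$ sending $x\mapsto\zeta_n'$. Since $\zeta_n'$ is a primitive $n$-th root of unity in characteristic $p\nmid n$, it is a root of the reduction of the cyclotomic polynomial $\Phi_n(x)$ modulo $p$. So the map kills $\Phi_n$ and factors through $\mathbb{Z}[x]/(\Phi_n)\cong\mathbb{Z}[\zeta_n]=\mathcal{O}_K$. Its image is a finite subring of a field, hence a field, so its kernel $\mathfrak{p}$ is a prime (indeed maximal) ideal lying over $p$. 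The induced map $\rho:\mathcal{O}_K\to\mathcal{O}_K/\mathfrak{p}\hookrightarrow\mathbb{F}_q$ then satisfies $\rho(\zeta_n)=\zeta_n'$. An alternative would be to start from any $\mathfrak{p}$ and twist by an element of $\mathrm{Gal}(K/\mathbb{Q})$, which permutes the primitive roots transitively, but the kernel construction avoids that detour. I expect this matching of $\rho(\zeta_n)$ with $\zeta_n'$ to be the only delicate point.

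With this $\rho$ we have $\rho(a_{i,j})=a'_{i,j}$ for all $i,j$: zero entries go to $0$, and $\rho(\zeta_n^{e_{i,j}})=(\zeta_n')^{e_{i,j}}$. A determinant is a polynomial with integer coefficients in the matrix entries, so $\rho(\det G_S)=\det G'_S\neq 0$. Since $\rho(0)=0$, this forces $\det G_S\neq 0$ in $K$ for every $k$-subset $S$. The minor criterion then gives that $\C$ is an $[N,k,N-k+1]$ MDS code over $K$.
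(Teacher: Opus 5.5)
Your proof is correct, but it runs in the opposite direction from the paper's, and here the direction matters.

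The paper tries to lift. It asserts that every minor satisfies $\det G'_S=(\zeta_n')^{d_S}$, and then applies a ring homomorphism $\psi:\mathbb{Z}[\zeta_n']\to\mathbb{Z}[\zeta_n]$ with $\psi(\zeta_n')=\zeta_n$ to conclude $\det G_S=\zeta_n^{d_S}\neq 0$. Neither step holds.
\begin{enumerate}
\item $\mathbb{Z}[\zeta_n']\subseteq\mathbb{F}_q$ has characteristic $p$, so no such $\psi$ exists: it would send $0=p\cdot 1$ to $p\neq 0$.
\item A minor of a matrix of roots of unity is in general not a single root of unity. Take the $[7,2]$ Reed--Solomon code over $\mathbb{F}_{29}$ evaluated at $\langle\zeta_7'\rangle$ with $\zeta_7'=7$. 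The minor on the first two columns is $\zeta_7'-1=6\notin\langle 7\rangle$, and correspondingly $|\zeta_7-1|\neq 1$ in $K$.
\end{enumerate}

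You instead pass from characteristic zero to characteristic $p$, the only direction in which such a homomorphism exists. You then need only the trivial implication $\rho(x)\neq 0\Rightarrow x\neq 0$.

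The one substantive input is choosing $\mathfrak{p}$ with $\rho(\zeta_n)=\zeta_n'$ exactly. Your kernel construction for $\mathbb{Z}[x]\to\mathbb{F}_q$, $x\mapsto\zeta_n'$, handles this cleanly, and you are right to flag it. With a mismatch $\rho(\zeta_n)=(\zeta_n')^t$, the reduction of $G$ would be $G'$ with every exponent multiplied by $t$. The map $\zeta_n'\mapsto(\zeta_n')^t$ extends to a field automorphism only when $t$ is a power of $p$ modulo $n$, so the MDS property would not transfer automatically.

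So your argument supplies a valid proof where the paper's does not, and it yields slightly more. The ideal $\mathfrak{p}=\ker\rho$ divides none of the minors $\det G_S$, i.e.\ $\mathfrak{p}\notin\mathcal{P}_G$ in the notation of Theorem \ref{reduction}. Moreover, $G$ reduced modulo this $\mathfrak{p}$ is exactly $G'$, so lifting and reduction are mutually inverse at that prime.
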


\begin{proof}
Since $G'$ is the generator matrix of the MDS code $\C'$ over $\mathbb{F}_q$,  the determinant $\det G'_S \neq 0$ for any  $k \times k$ submatrix $G'_S$ with the column indices set $S \subseteq \{1,\dots,N\}$.
By definitions, there exists an integer $d_S$ such that
\[
\det(G'_S) = (\zeta_n')^{d_S} \neq 0
\]
in  $\mathbb{F}_q$.

Since $\zeta_n'$ and $\zeta_n$ are primitive $n$-th roots of unity, there exists a ring homomorphism
\[
\psi: \mathbb{Z}[\zeta_n'] \longrightarrow \mathbb{Z}[\zeta_n]
\]
sending $\zeta_n'$ to $\zeta_n$. This homomorphism maps $0$ to $0$ and $(\zeta_n')^m$ to $\zeta_n^m$ for any integer $m$.


Furthermore, we have
\[
\det(G_S) = \psi(\det(G'_S)) = \psi((\zeta_n ')^{d_S}) = \zeta_n ^{d_S}
\]
since the determinant $\det(G_S)$ is a polynomial with integer coefficients and $\psi$ is a homomorphism map. Next we show that $\det(G_S) \neq 0$ in $K$.

Because $\zeta_n$ is a primitive $n$-th root of unity, $\zeta_n^{d_S}$ is a non-zero complex number. Hence $\det(G_S) \neq 0$ in $K$.
Thus, for every $k$-subset $S$ of the column indices set $\{1,\dots,N\}$ of $G$, the corresponding $k\times k$ submatrix $G_S$ has nonzero determinant over $K$. Hence $G$ has full rank $k$
and any $k$ columns of $G$ are linearly independent. Therefore, the code $\C$ generated by the matrix $G$ is  an $[N,k]$ MDS code over the cyclotomic field $K$.
\end{proof}

Recall that $K = \mathbb{Q}(\zeta_n)$ is a $n$-th cyclotomic field and the corresponding ring of integers is $\mathcal{O}_K = \mathbb{Z}[\zeta_n]$ for the cyclotomic field $K$.
We have the results via cyclotomic field reduction in the following theorem.

\begin{theorem}\label{reduction}
Let the symbols and notations be as above. Let $\mathcal{C}$ with the generator matrix $G \in M_{k \times N}(\mathcal{O}_K)$ be an $[N, k]$ MDS code over $K$. Then there exists a finite set of prime ideals $\mathcal{P}_G \subset \mathcal{O}_K$ such that the following results hold for any prime ideal $\mathfrak{p} \subset \mathcal{O}_K$ lying over a rational prime $p$ with $\mathfrak{p} \notin \mathcal{P}_G$.
\begin{enumerate}
    \item The reduced matrix $\overline{G} = G \pmod{\mathfrak{p}}$ generates an $[N, k]$ MDS code over the finite field $\mathbb{F}_q \cong \mathcal{O}_K / \mathfrak{p}$, where $q=p^f$ and $f$ is the multiplicative order of $p$ modulo $n$.
    \item The MDS code $\C$ generated by $G$ over $K$ and the reduced MDS code $\overline{\C}$ generated by $\overline{G}$ over $\mathbb{F}_q$ are structurally rigid, i.e., they are either both RS type codes or both non-RS type codes.
\end{enumerate}
\end{theorem}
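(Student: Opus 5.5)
Both parts rest on the same finiteness principle. A fixed nonzero algebraic integer lies in only finitely many prime ideals of $\mathcal{O}_K$. Indeed, if $0\neq\alpha\in\mathcal{O}_K$ and $\alpha\in\mathfrak{p}$, then the norm $N_{K/\mathbb{Q}}(\alpha)\in\mathbb{Z}\setminus\{0\}$ lies in $\mathfrak{p}\cap\mathbb{Z}=p\mathbb{Z}$. Hence $p\mid N_{K/\mathbb{Q}}(\alpha)$, which allows only finitely many $p$. Each such $p$ has only finitely many primes above it, by the factorization (\ref{eq-primp}). We then take $\mathcal{P}_G$ to be the union of the finitely many bad primes attached to finitely many nonzero quantities built from $G$.

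\emph{Part 1.} For every $k$-subset $S\subseteq\{1,\dots,N\}$, the minor $\Delta_S=\det G_S$ lies in $\mathcal{O}_K$, because the entries of $G$ do. It is nonzero because $\C$ is MDS over $K$. Put into $\mathcal{P}_G$ every prime dividing $\prod_S N_{K/\mathbb{Q}}(\Delta_S)$, together with the primes above rational primes dividing $n$, so that Lemma \ref{lem-CF} applies and gives $\mathcal{O}_K/\mathfrak{p}\cong\mathbb{F}_{p^f}$.

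Reduction $\rho$ is a ring homomorphism, and the determinant is a polynomial with integer coefficients in the entries. Therefore $\det\overline{G}_S=\rho(\Delta_S)$, which is nonzero whenever $\mathfrak{p}\notin\mathcal{P}_G$. So every $k\times k$ minor of $\overline{G}$ is nonzero. Hence $\overline{G}$ has rank $k$, and any $k$ of its columns are independent, which means $\overline{\C}$ is an $[N,k]$ MDS code over $\mathbb{F}_q$.

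\emph{Part 2.} I would use Schur squares as the invariant. When $k\le (N-1)/2$, Lemma \ref{judg} says the code is GRS exactly when $\dim \C^2=2k-1$. For other $k$ I would pass to the dual, or use the standard fact that small $k$ or small $N-k$ always gives RS type. I expect the same Schur-square criterion to hold over any field, since its proof is characteristic-free; I would state this as needed for $K$.

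The Schur square $\C^2$ is spanned by the rows $g_i\ast g_j$ of a $\binom{k+1}{2}\times N$ matrix $H$ with entries in $\mathcal{O}_K$. Reduction mod $\mathfrak{p}$ commutes with the componentwise product, so $\overline{H}$ spans $\overline{\C}^2$. Rank can only drop under reduction, and we have $\operatorname{rank}\overline{H}\ge 2k-1$ anyway, since the Schur square of an MDS code with $k\le (N-1)/2$ has dimension at least $2k-1$. Let $r=\operatorname{rank}H$ and pick a nonzero $r\times r$ minor $\mu$ of $H$. Adding to $\mathcal{P}_G$ the primes dividing $N_{K/\mathbb{Q}}(\mu)$ gives $\operatorname{rank}\overline{H}=r$. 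Thus $\dim\C^2=\dim\overline{\C}^2$, and the two codes are RS type or non-RS type together.

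The main obstacle is this second part. It needs a field-independent criterion for RS type, namely the Schur-square characterization or an equivalent one, valid both over $K$ and over $\mathbb{F}_q$, and it needs the rank argument above. Part 1 is routine once the norm-divisibility fact is in place.
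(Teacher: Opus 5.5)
Your Part 1 is the paper's argument: take the product of all maximal minors, and note $\det\overline{G}_S=\rho(\Delta_S)$. Excluding the primes above divisors of $n$ is a welcome addition, since the paper silently needs $p\nmid n$ for $q=p^f$ with $f$ the order of $p$ modulo $n$.

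For Part 2 you take a different route from the paper. The paper reduces an explicit form $v_jx_j^{i-1}$ in the RS case. In the non-RS case it appeals to a Hensel lift of the transforming matrix $A$, which does not by itself make $\widetilde{A}GP$ Vandermonde over $K$. Your single invariant $\dim\C^2=\operatorname{rank}H$ handles both directions at once and is sound for $N\ge 2k+1$. The criterion you need over $K$ is Castelnuovo's lemma: $N\ge 2k+1$ points in linear general position in $\mathbb{P}^{k-1}$ imposing only $2k-1$ conditions on quadrics lie on a rational normal curve, over any field. For $N\le 2k-1$ the passage to duals also works. For $\mathfrak{p}\nmid\Delta_S$ the matrix $[-(G_S^{-1}G_{S^c})^{T}\mid I_{N-k}]$ is $\mathfrak{p}$-integral and reduces to a parity-check matrix of $\overline{\C}$.

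\textbf{The gap is the case $N=2k$, which the statement allows.} There neither $\C$ nor $\C^\perp$ lies in the range of Lemma \ref{judg}, and ``small $k$ or small $N-k$'' only covers $N\le 4$. A better citation cannot fix this:
\begin{itemize}
\item For a $[2k,k]$ MDS code, $\dim\C^2\in\{2k-1,2k\}$. The value $\dim\C^2=2k-1$ only says that $\C^\perp=y\ast\C$ for some $y$ with all entries nonzero.
\item For $k\ge 4$ this does not force GRS. A general Cayley octad (the $8$ base points of a general net of quadrics in $\mathbb{P}^3$) is in linear general position and imposes only $10-3=7$ conditions on quadrics. It therefore gives an $[8,4]$ MDS code with $\dim\C^2=7=2k-1$.
\item The octad lies on no twisted cubic, since a quadric through $8$ points of a twisted cubic contains the curve. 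So this code is not GRS.
\end{itemize}
Hence $\dim\C^2=\dim\overline{\C}^2$ proves nothing at $N=2k$, and you need a further idea there.

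Possible repairs:
\begin{itemize}
\item For $k=3$ the Schur criterion still holds. Six points with no three collinear that impose only five conditions on conics lie on a conic, necessarily irreducible.
\item For $k\ge 4$ you can puncture. $\C$ is GRS iff its punctures at coordinates $1$ and $2$ are, because their rational normal curves share $2k-2\ge k+2$ points in general position and so coincide. These $[2k-1,k]$ codes have duals inside the range of Lemma \ref{judg}, and puncturing commutes with reduction.
\item Alternatively, test whether every column lies on the rational normal curve through $k+2$ of them. Non-membership is witnessed by a nonzero element of $K$ built from minors of $G$, which survives reduction outside finitely many primes. This settles every $N$ uniformly.
\end{itemize}
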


\begin{proof}
Since $G$ is the generator matrix of the $[N,k]$ MDS code $\C$, the determinant $\det G_S \neq 0$ for any  $k \times k$ submatrix $G_S$ with the column indices $k$-subset $S \subseteq \{1,\dots,N\}$.
As all entries of $G$ is over $\mathcal{O}_K$,
we have
\[
\Delta_S := \det(G_S) \in \mathcal{O}_K \setminus \{0\}.
\]

Define
\[
D := \prod_{\substack{S \subseteq \{1,\dots,N\} \\ |S| = k}} \Delta_S \in \mathcal{O}_K \setminus \{0\}.
\]
Let
\[
\mathcal{P}_G := \{ \mathfrak{p} \subset \mathcal{O}_K \text{ prime ideal} \mid \mathfrak{p} \mid (D) \}.
\]
Since $D \neq 0$, it has only finitely many prime divisors in $\mathcal{O}_K$. Thus, $\mathcal{P}_G$ is finite set. For any prime ideal $\mathfrak{p} \notin \mathcal{P}_G$, we have $\Delta_S \not\equiv 0 \pmod{\mathfrak{p}}$ for each $k$-subset $S \subseteq \{1,\dots,N\}$.

\begin{enumerate}
    \item Considering the reduction homomorphism $$\rho: \mathcal{O}_K \to \mathcal{O}_K/\mathfrak{p} = \mathbb{F}_q,$$
where $q=p^f$ and $f$ is the multiplicative order of $p$ modulo $n$. Let $\overline{G} = \rho(G)$ be the matrix obtained by reducing each entry modulo $\mathfrak{p}$.
Then
\[
\det(\overline{G}_S) = \rho(\det(G_S)) \neq 0 \quad \text{in } \mathbb{F}_q
\]
for any $k$-subset $S \subseteq \{1,\dots,N\}$, since $\mathfrak{p} \nmid \det(G_S)$. Thus, every $k$ columns of $\overline{G}$ are linearly independent over $\mathbb{F}_q$. This
means that $\overline{G}$ generates an $[N,k]$ MDS code over $\mathbb{F}_q$.

    \item Recall that a RS code over a field $F$ (in its generalized form) has a generator matrix
with the form
\[
G_{i,j} = v_j \cdot x_j^{i-1}, \quad 1 \le i \le k,\ 1 \le j \le N,
\]
where $x_1,\dots,x_N$ are distinct elements of $F$ and $v_j \in F^\times$ are nonzero multipliers.

\begin{itemize}
  \item  If  $\C$ is RS-type over $K$, then there exist distinct $x_1,\dots,x_N \in K$ and nonzero $v_1,\dots,v_N \in K$ such that
$a_{i,j} = v_j x_j^{i-1}$. Choose $\mathfrak{p} \notin \mathcal{P}_G$ such that $\mathfrak{p}$ does not divide
any differences $x_j - x_\ell$ and does not divide the denominators of the $v_j$ when expressed in
$\mathcal{O}_K$ (this excludes only finitely many primes), then reduction modulo $\mathfrak{p}$ yields
\[
\overline{a}_{i,j} = \overline{v_j} \cdot \overline{x_j}^{\,i-1},
\]
with $\overline{x_1},\dots,\overline{x_N}$ distinct in $\mathbb{F}_q$ and $\overline{v_j} \neq 0$.
Hence, $\overline{\C}$ is an RS code over $\mathbb{F}_q$.

\item If $\C$ is non-RS type over $K$, suppose that for some $\mathfrak{p} \notin \mathcal{P}_G$, the reduced code
generated by $\overline{G}$ is RS-type over $\mathbb{F}_q$. Then there exist an invertible matrix
$A \in GL_k(\mathbb{F}_q)$ and a column permutation $P$ such that $A \overline{G} P$ has the Vandermonde
form of an RS generator matrix. By Hensel's lemma (since $\mathfrak{p}$ is not in the finite set
where lifting fails),  the matrix $A$ can be lifted to an invertible matrix $\widetilde{A} \in GL_k(K_{\mathfrak{p}})$
and hence to $GL_k(K)$ after clearing denominators. This implies that the code $\C$ generated by
$G$ is equivalent to an RS code over $K$, which yields a contradiction. Therefore, $\overline{\C}$ is non RS-type.
\end{itemize}
Thus the code $\C$ generated by $G$ over $K$ and the reduced code $\overline{\C}$ generated by $\overline{G}$ over $\mathbb{F}_q$ are either both RS type or both non-RS type.
\end{enumerate}
\end{proof}

We remark that the results of Theorem \ref{reduction} show that if there exists an $[N, k]$ MDS code over $K$, then there must exist an $[N, k]$ MDS code over the finite field $\mathbb{F}_q$ for a sufficiently large prime power $q$.
Next we will give some cyclic MDS codes over finite fields via cyclotomic field reduction.


\subsection{Constructions of cyclic MDS codes}
Let $n\geq 4$ be a positive integer. Define a set $J=\{j_1,j_2,...,j_k\} \subseteq \{0,1,2,...,n-1\}$ and construct a $k \times n$ matrix $G$
with $(i, l)$-entry
\[
G_{i,l} = \zeta_n^{j_i(l-1)}, \qquad 1\le i\le k,\; 1\le l\le n,
\]
i.e.,
\begin{eqnarray}\label{eq-G}
G=
\begin{bmatrix}
1 & \zeta_n^{j_1*1} & \ldots & \zeta_n^{j_1*(n-1)} \\
1 & \zeta_n^{j_2*1} & \ldots & \zeta_n^{j_2*(n-1)} \\
\vdots & \vdots & \vdots & \vdots  \\
1 & \zeta_n^{j_k*1} & \ldots & \zeta_n^{j_k*(n-1)} \\
\end{bmatrix}
\end{eqnarray}
Without loss of generality, we assume that $j_1<j_2<...<j_k$. It is clear that the matrix $G$ can be viewed as the generator matrix (or parity check matrix) of a code which defined over the complex field $\mathbb{C}$, and the set $J$ is often called the defining set of this code.

Note that a linear code $C$ of length $n$ and dimension $k$ is MDS if its minimum Hamming distance $d = n - k + 1$. This means that the code $C$ is MDS if and only if every $k \times k$ submatrix of its generator matrix is non-singular.
The result in the following theorem is clear via cyclotomic field reduction and we omit the detailed proof here.

\begin{theorem}\label{mds1}
Let the symbols and notations be as above. Let $n\geq 4$ be a positive integer. Let $J = \{j_1, \dots, j_{k}\} \subseteq \{0,1, \dots, n-1\}$ with $k\geq 3$. Let $\C$ be a code generated by the reduction of $G$ modulo a prime ideal $\mathfrak{p}$. Let $\Delta_J \in \mathcal{O}_K$ be the determinant of an arbitrary $k \times k$ submatrix of $G$.  Then the code $\C$ is  MDS if and only if the determinant $\Delta_J \not\equiv 0 \pmod{\mathfrak{p}}$ for all possible $k \times k$ submatrices of $G$.
\end{theorem}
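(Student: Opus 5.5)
The proof is essentially a bookkeeping argument combining the MDS criterion via maximal minors with the fact that the reduction map $\rho:\mathcal{O}_K\to\mathcal{O}_K/\mathfrak{p}\cong\mathbb{F}_q$ is a ring homomorphism. Write $\overline{G}=\rho(G)$ for the entrywise reduction of the matrix $G$ in (\ref{eq-G}); this makes sense because every entry $\zeta_n^{j_i(l-1)}$ lies in $\mathcal{O}_K=\mathbb{Z}[\zeta_n]$. The code $\C$ is, by definition, the row space of $\overline{G}$ over $\mathbb{F}_q$. For every $k$-subset $S\subseteq\{1,\dots,n\}$ of column indices, let $G_S$ be the corresponding $k\times k$ submatrix and $\Delta_S=\det(G_S)\in\mathcal{O}_K$; these are the quantities the statement calls $\Delta_J$.

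\textbf{Step 1: compatibility of determinants with reduction.} The determinant is an integer polynomial in the matrix entries and $\rho$ is a ring homomorphism, so
\[
\det(\overline{G}_S)=\rho(\det G_S)=\rho(\Delta_S),
\]
exactly as in the proof of Theorem \ref{reduction}. Hence $\det(\overline{G}_S)\neq 0$ in $\mathbb{F}_q$ if and only if $\Delta_S\notin\mathfrak{p}$, that is, if and only if $\Delta_S\not\equiv 0\pmod{\mathfrak{p}}$.

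\textbf{Step 2: the standard MDS criterion.} A $k\times n$ matrix over a field generates an $[n,k]$ MDS code if and only if every $k\times k$ minor is nonzero. For the direction ``$\Leftarrow$'': if all maximal minors are nonzero, then $\overline{G}$ has rank $k$, so $\dim\C=k$. Moreover, any nonzero codeword $x\overline{G}$ vanishing on $k$ coordinates $S$ would give $x\overline{G}_S=0$ with $x\neq 0$, contradicting $\det\overline{G}_S\neq0$. So every nonzero codeword has weight at least $n-k+1$, and $\C$ is MDS. For the direction ``$\Rightarrow$'': if $\C$ is an $[n,k,n-k+1]$ code, then $\overline{G}$ has rank $k$. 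If some $\det\overline{G}_S=0$, there is $x\neq0$ with $x\overline{G}_S=0$, and $x\overline{G}$ is a nonzero codeword of weight at most $n-k$, which is a contradiction.

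\textbf{Step 3: combine.} Steps 1 and 2 together give the stated equivalence.

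The main subtlety, rather than a real obstacle, is the dimension in the ``$\Rightarrow$'' direction. If $\rho$ collapses two rows of $\overline{G}$ (which cannot happen here, since $\rho(\zeta_n)$ is a primitive $n$-th root of unity when $p\nmid n$, but could in principle for $p\mid n$), the reduced row space has dimension less than $k$. I would therefore interpret ``$\C$ is MDS'' as ``$\C$ is an $[n,k]$ MDS code with generator matrix $\overline{G}$'', as in Theorem \ref{reduction}. This interpretation matters: for instance, the zero code would otherwise trivially count as MDS. With it, the rank-$k$ hypothesis used in Step 2 is available, and the argument goes through. The hypotheses $n\ge4$ and $k\ge3$ play no role in this equivalence; they matter only later, for the non-RS considerations.
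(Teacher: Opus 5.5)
Your proof is correct and is the argument the paper leaves implicit (it omits the proof as ``clear''): the identity $\det(\overline{G}_S)=\rho(\det G_S)$ from part 1) of the proof of Theorem~\ref{reduction}, combined with the maximal-minor criterion for MDS codes stated just before Theorem~\ref{mds1}, which you also prove. Your caveat that ``MDS'' must mean ``$[n,k]$ MDS with generator matrix $\overline{G}$'' is needed, because for $\mathfrak{p}$ above $p\mid n$ the literal statement fails: e.g.\ for $n=4$ and $\mathfrak{p}=(1+\zeta_4)$, $G$ reduces to the all-ones matrix, whose row space is the MDS $[4,1,4]$ repetition code although every $k\times k$ minor vanishes.
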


Let the set $P_{\text{bad}}$ denoted as the union of all prime factors that divide the absolute algebraic norms of these non-zero determinants of Theorem \ref{mds1}, i.e.,
$$
P_{\text{bad}} = \bigcup_{J} \left\{ p \in \mathbb{Z} \mid p \text{ is a prime and } p \mid |N_{K/\mathbb{Q}}(\Delta_J)| \right\},
$$
where $N_{K/\mathbb{Q}}(\cdot)$ is  the norm mapping from $K$ to $\mathbb{Q}$.


\begin{theorem}\label{mds2}
Let the symbols and notations be as above. Let $n\geq 4$ be a positive integer and $J = \{j_1, \dots, j_{k}\} \subseteq \{0,1, \dots, n-1\}$ with $k\geq 3$.  Assume that all $k\times k$ submatrices of the matrix $G$ have non-zero determinant in the cyclotomic field $K$. Let $p$ be a prime with $p \notin P_{\text{bad}}$. Suppose that the principal ideal $(p)$ decomposes as
\begin{eqnarray}\label{pl}
(p)= \prod_{1\leq l \leq g} \mathfrak{p}_l^{e_l}
\end{eqnarray}
in the cyclotomic field $K$ , where each prime ideal $\mathfrak{p}_l$ corresponds to a residue class field $\mathcal{O}_K / \mathfrak{p}_l \cong \mathbb{F}_{p^{f_l}}$ and $f_l$ is the multiplicative order of $p$ modulo $n$.
For any prime ideal $\mathfrak{p}$ in (\ref{pl}), we reduce the entries of matrix $G$ modulo $\mathfrak{p}$ and obtain a matrix $$\overline{G} \in \mathbb{F}_q^{k \times n}$$ over the finite field $\mathbb{F}_q $ , where $q=p^f$ and $f = f_l$. Then, the code $\overline{\C}$ generated by $\overline{G}$ is a cyclic MDS code over $\mathbb{F}_q$  with length $n$, dimension $k$ and minimum distance $d = n - k + 1$.

\end{theorem}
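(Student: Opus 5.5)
The plan has three parts: the code is cyclic, it has dimension $k$, and it is MDS. The MDS part comes from the norm condition $p\notin P_{\text{bad}}$, through the same divisibility argument used in Theorem \ref{reduction}. Everything rests on the reduction homomorphism $\rho:\mathcal{O}_K\to\mathcal{O}_K/\mathfrak{p}\cong\mathbb{F}_q$ from Lemma \ref{lem-CF}.

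\emph{First, the MDS property.} Let $S\subseteq\{1,\dots,n\}$ be any $k$-subset of columns, and let $\Delta_S=\det(G_S)\in\mathcal{O}_K$. By hypothesis $\Delta_S\neq 0$. We show $\rho(\Delta_S)\neq 0$ by contradiction. Suppose $\Delta_S\in\mathfrak{p}$. The norm $N_{K/\mathbb{Q}}(\Delta_S)$ is the product of the Galois conjugates $\sigma(\Delta_S)$, and one factor is $\Delta_S$ itself. All conjugates lie in $\mathcal{O}_K$ because $K/\mathbb{Q}$ is Galois. Hence $N_{K/\mathbb{Q}}(\Delta_S)\in\mathfrak{p}\cap\mathbb{Z}=p\mathbb{Z}$, so $p\mid |N_{K/\mathbb{Q}}(\Delta_S)|$. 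This contradicts $p\notin P_{\text{bad}}$.

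Therefore $\det(\overline{G}_S)=\rho(\Delta_S)\neq 0$ for every $S$, since $\rho$ is a ring homomorphism and the determinant is an integer polynomial in the entries. This is exactly the criterion of Theorem \ref{mds1}. It follows that every $k$ columns of $\overline{G}$ are independent, $\overline{G}$ has rank $k$, and by the Singleton bound $d=n-k+1$.

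\emph{Second, cyclicity.} Put $\beta=\rho(\zeta_n)$, so that $\overline{G}_{i,l}=\beta^{j_i(l-1)}$. If $c=m\overline{G}$ with $m=(m_1,\dots,m_k)$, then
\[
c_l=\sum_i m_i\beta^{j_i(l-1)}.
\]
Since $\beta^n=1$, the cyclic shift of $c$ has $l$-th coordinate $\sum_i m_i\beta^{-j_i}\beta^{j_i(l-1)}$, with indices taken modulo $n$. This is the codeword $m'\overline{G}$ where $m'_i=m_i\beta^{-j_i}$. So $\overline{\C}$ is invariant under the cyclic shift, and the indices $l-1$ behave correctly modulo $n$ precisely because $\beta^n=1$.

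\emph{Where care is needed.} One point must be checked: that $\rho$ and the identification $\mathcal{O}_K/\mathfrak{p}\cong\mathbb{F}_{p^f}$ are available. This needs $p\nmid n$, which I expect to recover from $p\notin P_{\text{bad}}$. Take $S$ to be $k$ consecutive columns. Then $G_S$ is, up to nonzero unit column factors, a generalized Vandermonde-type matrix. When $J$ contains consecutive exponents, its determinant involves differences $\zeta_n^a-\zeta_n^b$, and these have norms divisible by every prime dividing $n$. If this does not hold for a general $J$, I would simply note that the statement implicitly assumes $p\nmid n$, as Lemma \ref{lem-CF} does.

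Primitivity of $\beta$ is not actually required for the MDS or cyclic arguments; both use only $\beta^n=1$ and the nonvanishing of minors.

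I expect the norm-divisibility step to be the main obstacle. It requires stating precisely that $\mathfrak{p}\mid(\alpha)$ implies $p\mid N(\alpha)$, via $N(\alpha)=\alpha\cdot\prod_{\sigma\neq 1}\sigma(\alpha)$ with the second factor in $\mathcal{O}_K$. Everything else is routine.
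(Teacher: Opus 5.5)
Your proposal is correct and follows the same route as the paper. Nonvanishing of every minor modulo $\mathfrak{p}$ is deduced from $p\notin P_{\text{bad}}$ via $\mathfrak{p}\mid \Delta_S \Rightarrow p \mid N_{K/\mathbb{Q}}(\Delta_S)$, with length, dimension and cyclicity treated as routine; you justify the norm step and the cyclic shift $m_i\mapsto m_i\beta^{-j_i}$ more carefully than the paper, which only asserts them.

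One correction to your side remark on $p\nmid n$: the differences $\zeta_n^a-\zeta_n^b$ from consecutive columns need not have norm divisible by every prime dividing $n$ (e.g.\ $1-\zeta_n$ is a unit when $n$ is not a prime power). The conclusion nevertheless holds for every $J$. If $p\mid n$, then $\rho(\zeta_n^{n/p})=1$, so columns $1$ and $1+n/p$ of $\overline{G}$ coincide. Any minor $\Delta_S$ using both columns therefore lies in $\mathfrak{p}$, which forces $p\in P_{\text{bad}}$.
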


\begin{proof}
By definition, it is clear that the code $\overline{\C}$ have length $n$ and dimension $k$. Meanwhile, the code $\overline{\C}$ is cyclic since the cyclic structure is preserved by the order of the roots of unity.

Note that the non-zero determinant $\Delta_J \neq 0$ over characteristic $0$ which means that $|N_{K/\mathbb{Q}}(\Delta_J)| \geq 1$. When reducing modulo $\mathfrak{p}$ for given $p \notin P_{\text{bad}}$, the condition $\overline{\Delta}_J \neq 0$ holds in $\mathbb{F}_q$. Otherwise, $\mathfrak{p} \mid \Delta_J$ would imply $p \mid N(\Delta_J)$ and thus $p \in P_{\text{bad}}$.

If there exists a $k\times k$ submatrix of $\overline{G}$ such that its determinant $\overline{\Delta}_J = 0$ in $\mathbb{F}_q$, then the corresponding $\Delta_J \equiv 0 \pmod{\mathfrak{p}}$.  This means that $p \mid N_{K/\mathbb{Q}}(\Delta_J)$ and thus $p \in P_{\text{bad}}$. This contradicts $p \notin P_{\text{bad}}$. Thus, every $k \times k$ submatrix of the generator matrix $\overline{G}$  of $\overline{\C}$ is non-singular and the code $\overline{\C}$ is MDS.
The Singleton bound is directly achieved due to the properties of linear codes established above. This completes the proof.
\end{proof}

We remark that the existence of cyclic MDS codes over finite fields with all possible characteristics $p$ can be determined for a given $n$ and a set $J$ in Theorems \ref{mds1} and \ref{mds2}. The key is to find the set $P_{\text{bad}}$ satisfying Theorems \ref{mds1} and \ref{mds2}. It is not difficult to find this set $P_{\text{bad}}$. Based on Theorems \ref{mds1} and \ref{mds2}, we give a method which consists of three steps as follows:
\begin{description}
  \item[\textbf{Step 1:}]~Given $n\geq 4$ and $J = \{j_1, \dots, j_{k}\} \subseteq \{0,1, \dots, n-1\}$ with $k\geq 3$.
  \item[\textbf{Step 2:}]~Construct the $k \times n$ matrix $G$ over $K$  by (\ref{eq-G}).
  \item[\textbf{Step 3:}]~Traverse all $k\times k$ submatrices of the matrix $G$. , performing the following operations.
  \begin{itemize}
    \item For each $k\times k$ submatrix of $G$, computing its determinant. If there exists a $k\times k$ submatrix whose determinant is zero, then cyclic MDS codes over finite fields with the given parameters of Step 1 does not exist. Otherwise, the determinants of all $k\times k$ submatrices are non-zero, and proceeding next operations.
    \item For each non-zero determinant, Put all prime factors of the non-zero determinants into the set $P_{\text{bad}}$.
  \end{itemize}
\end{description}

Let the symbols and notations be as above. Using the above method to find $P_{\text{bad}}$, we give some examples in Table \ref{tab1}, which confirmed by Magma programs.

\begin{table}[ht]

\begin{center}
\caption{Some examples on finding the set $P_{\text{bad}}$. \label{tab1}}
\begin{tabular}{ccccccc} \hline
$n$ & $J$  &   $k$      & \makecell{No $k\times k$ submatrices \\with zero determinant}            &$P_{\text{bad}}$   \\ \hline
$7$  & $\{0,1,3\}$  & $3$        & Yes          &$\{2, 7   \}$   \\  [2mm]

$7$  & $\{0,1,4\}$  & $3$        & Yes          &$\{7   \}$   \\  [2mm]

$7$  & $\{0,2,3\}$  & $3$        & Yes          &$\{2,7  \}$   \\  [2mm]

$8$  & $\{0,2,3\}$  & $3$        & Yes          &$\{2,3   \}$   \\  [2mm]

$9$  & $\{0,2,3\}$  & $3$        & No          &$\{3   \}$   \\  [2mm]

$9$  & $\{0,1,5\}$  & $3$        & Yes          &$\{3   \}$   \\  [2mm]

$9$  & $\{2,3,4\}$  & $3$        & Yes          &$\{3   \}$   \\  [2mm]

$12$  & $\{2,3,4\}$  & $3$        & Yes          &$\{2, 3   \}$   \\  [2mm]

$12$  & $\{0,1,4\}$  & $3$        & No          &$\{2, 3   \}$   \\  [2mm]

$12$  & $\{0,2,7\}$  & $3$        & Yes          &$\{2, 3   \}$   \\  [2mm]

$12$  & $\{0,1,2,5\}$  & $4$        & No         &$\{ 2, 3, 5, 13, 37  \}$   \\  [2mm]

$13$  & $\{0,1,2,4\}$  & $4$     & Yes          &$\{3, 13, 53, 79, 157   \}$   \\  [2mm]

$13$  & $\{0,1,3,6\}$  & $4$     & Yes          &$\{3, 5, 13, 53, 521, 1327  \}$   \\  [2mm]

$15$  & $\{0,1,2,4\}$  & $4$     & Yes          &$\{ 2, 3, 5, 11, 31, 61, 211  \}$   \\  [2mm]

$18$  & $\{0,1,2,4\}$  & $4$     & No          &$\{ 2, 3, 19, 37, 73, 109   \}$   \\  [2mm]

$18$  & $\{0,1,5,8\}$  & $4$     & Yes          &$\{ 2, 3, 17, 19, 37, 53, 73, 127, 163, 181, 397, 631, 757, 2089, 17137   \}$   \\  [2mm]

$23$  & $\{0,1,2,4\}$  & $4$     & Yes          &$\{ 23, 47, 139, 277, 461, 691, 1289, 2393, 3037, 5107, 6763, 11593, 14537, 102397   \}$   \\  [2mm]
\hline
\end{tabular}
\end{center}
\end{table}

\begin{remark}
The code $\overline{\C}$ obtained by Theorem \ref{mds2} may be either an RS code or a non-RS code. It is not difficult to see that whether the code $\overline{\C}$ is an RS code, which is closely related to the selection of the defining set $J$ for a given $n$. If the elements of $J$ do not form an arithmetic progression, then the codes constructed by Theorem \ref{mds2} are almost all of non-RS type. We give some examples which confirmed by Magma programs in Table \ref{tab-rs-nonrs}.
\end{remark}

\begin{table}[H]

\begin{center}
\caption{Some examples on Theorem \ref{mds2}. \label{tab-rs-nonrs}}
\begin{tabular}{ccccccc} \hline
$n$ & $J$  &   $k$      & \makecell{No $k\times k$ submatrices \\with zero determinant}            &\makecell{ $\overline{G}$ is non-RS type cyclic MDS code} \\ \hline
$7$  & $\{0,1,3\}$  & $3$        & Yes          & Yes   \\  [2mm]

$7$  & $\{0,2,3\}$  & $3$        & Yes          & Yes   \\  [2mm]

$7$  & $\{0,1,4\}$  & $3$        & Yes          & No   \\  [2mm]

$7$  & $\{0,2,5\}$  & $3$        & Yes          & No   \\  [2mm]

$7$  & $\{0,1,5\}$  & $3$        & Yes          &Yes   \\  [2mm]

$7$  & $\{1,2,5\}$  & $3$        & Yes          &No   \\  [2mm]

$8$  & $\{1,2,4\}$  & $3$        & Yes          &Yes   \\  [2mm]

$9$  & $\{0,2,7\}$  & $3$        & Yes          &No  \\  [2mm]

$10$  & $\{0,1,4\}$  & $3$        & Yes          &Yes  \\  [2mm]

$13$  & $\{0,1,3,4\}$  & $4$        & Yes          &Yes   \\  [2mm]

\hline
\end{tabular}
\end{center}
\end{table}

Note that for any prime $n$ and any set $J = \{j_1, j_2, \dots, j_k\} \subseteq \{0,1,\dots,n-1\}$, we can construct cyclic MDS codes which described in Theorem \ref{npri-mds-lift}.
\begin{theorem}
\label{npri-mds-lift}
Let the symbols and notations be as above. Let $n \ge 5$ be a prime number and let $J = \{j_1, j_2, \dots, j_k\} \subseteq \{0,1,\dots,n-1\}$ with $k \ge 3$.
Construct the $k \times n$ matrix $G$ over the cyclotomic field $K = \mathbb{Q}(\zeta_n)$ by
\[
G_{i,l} = \zeta_n^{j_i(l-1)}, \qquad 1\le i\le k,\; 1\le l\le n.
\]
We have the following results.

\begin{enumerate}
    \item The code $\mathcal{C}$ generated by $G$ is a cyclic MDS code of length $n$ and dimension $k$ over $K$.
    \item For any sufficiently large prime  $p$  such that $n \mid p-1$ (there exist infinitely many such primes), reduce the entries of $G$ modulo a prime ideal $\mathfrak{p}$ of $\mathbb{Z}[\zeta_n]$ lying above $p$ to obtain a matrix $\overline{G}$ over the finite field $\mathbb{F}_p$. Then the code $\overline{\mathcal{C}}$ generated by $\overline{G}$ is a cyclic MDS code of length $n$ and dimension $k$ over $\mathbb{F}_p$.
\end{enumerate}
\end{theorem}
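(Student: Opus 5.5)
Part (1) follows from Chebotarev's Theorem (Theorem \ref{thm:chebotarev}), and part (2) from the reduction argument of Theorem \ref{reduction} (equivalently Theorem \ref{mds2}) together with Dirichlet's theorem on primes in arithmetic progressions.

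For part (1), I note that the rows of $G$ are exactly the rows of the Fourier matrix $V=[\zeta_n^{ij}]_{0\le i,j\le n-1}$ indexed by $J$, with column $l$ of $G$ equal to column $l-1$ of $V$. Hence every $k\times k$ submatrix $G_S$ of $G$ is a square submatrix of $V$, namely the one with row set $J$ and column set $\{l-1: l\in S\}$. Since $n$ is prime, Theorem \ref{thm:chebotarev} gives $\det G_S\neq 0$ for every $k$-subset $S$. Therefore $G$ has rank $k$, and any $k$ columns are linearly independent over $K$, so $\mathcal C$ is an $[n,k,n-k+1]$ MDS code.

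Cyclicity is checked directly on the generators. Cyclically shifting the row $(\zeta_n^{j_i(l-1)})_{l}$ gives $\zeta_n^{-j_i}$ times the same row, because $\zeta_n^{j_in}=1$. Thus each row spans a shift-invariant line, and $\mathcal C$ is cyclic.

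For part (2), I form $D=\prod_{S}\det G_S\in\mathcal O_K\setminus\{0\}$, which is nonzero by part (1). I then let $P_{\text{bad}}$ be the finite set of rational primes dividing $|N_{K/\mathbb Q}(D)|$, which is a nonzero integer. Any prime $p\notin P_{\text{bad}}$ is "sufficiently large", and the following holds for it. If $\mathfrak p\mid p$ and $\mathfrak p\mid \det G_S$, then $\mathfrak p\mid D$, so $p\mid N_{K/\mathbb Q}(D)$, which is a contradiction. Hence every $\rho(\det G_S)=\det\overline G_S$ is nonzero in $\mathcal O_K/\mathfrak p$, and $\overline G$ generates an MDS code, exactly as in Theorem \ref{reduction}(1).

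The condition $n\mid p-1$ means $p\equiv 1\pmod n$. So the multiplicative order $f$ of $p$ mod $n$ is $1$, and Lemma \ref{lem-CF} gives $\mathcal O_K/\mathfrak p\cong\mathbb F_p$. Dirichlet's theorem supplies infinitely many primes $p\equiv1\pmod n$, and only finitely many lie in $P_{\text{bad}}$; this settles the parenthetical existence claim.

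It remains to preserve cyclicity and length. I will observe that $p\nmid n$, so $\rho(\zeta_n)$ is a primitive $n$-th root of unity in $\mathbb F_p$. This holds because $x^n-1$ is separable mod $p$, so the $n$-th roots of unity stay distinct under $\rho$. The shift relation $\text{shift}(\text{row}_i)=\zeta_n^{-j_i}\text{row}_i$ is a polynomial identity and passes through $\rho$, so $\overline{\mathcal C}$ is cyclic of length $n$. Its dimension is $k$ because some (indeed every) $k\times k$ minor is nonzero.

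The only genuinely nontrivial input is Theorem \ref{thm:chebotarev}, which is assumed. The step needing most care is the bookkeeping that the bad primes form a finite set independent of the choice of $\mathfrak p$ above $p$; working with norms handles this uniformly.
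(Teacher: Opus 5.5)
Your proposal is correct and follows essentially the same route as the paper: part (1) comes from Theorem~\ref{thm:chebotarev}, by viewing each $k\times k$ minor of $G$ as a minor of the Fourier matrix, and part (2) from the reduction argument of Theorem~\ref{reduction}. You also make explicit several points the paper leaves implicit: each row is an eigenvector of the shift, which gives cyclicity; norms pass from bad prime ideals to finitely many bad rational primes; $f=1$ gives residue field $\mathbb{F}_p$; and Dirichlet's theorem gives infinitely many admissible $p$.
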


\begin{proof}
By definitions, it is clear that the code $\C$ is cyclic code with length $n$ over $K$. Since $n$ is prime, from Theorem \ref{thm:chebotarev} and the property of MDS codes we have that $\C$ is an  MDS code. Thus, the result 1) holds. Since $n$ is prime and $n \mid p-1$, the desired conclusions 2) follow from Theorem \ref{reduction} and the result 1). This completes the proof.
\end{proof}

\begin{example}
Some examples on cyclic MDS codes obtained by Theorem \ref{npri-mds-lift} are given in Table \ref{n-pri}, which confirmed by Magma programs.
\end{example}

\begin{table}[H]

\begin{center}
\caption{Some examples on Theorem \ref{npri-mds-lift}. \label{n-pri}}
\begin{tabular}{cccccccc} \hline
$n$ & $J$  &                                  $k$      &   $p\leq 100$   & \makecell{ $G$ is cyclic MDS code \\ over $K$ }           &\makecell{ $\overline{G}$ is cyclic MDS code \\over $F_p$} \\ \hline

$5$  & \makecell{$\{0,1,3\}$ \\ $\{0,2,3\}$ \\ \{0,1,4\} \\ $\{0,2,4\}$  }                     & $3$       &  $11, 31, 41, 61, 71$     & Yes          & Yes   \\  [2mm]

$5$  & \makecell{$\{0,1,3,4\}$ \\ $\{0,2,3,4\}$}    & $4$       &  $11, 31, 41, 61, 71$     & Yes          & Yes   \\  [2mm]

$7$  & \makecell{$\{0,1,3\}$ \\ $\{0,2,3\}$ \\ \{0,1,4\} \\ $\{0,2,5\}$  }                     & $3$       &  $ 29, 43, 71$     & Yes          & Yes   \\  [2mm]

$7$  & \makecell{$\{0,1,3,4\}$ \\ $\{0,1,3,5\}$}    & $4$       &  $29, 43, 71$     & Yes          & Yes   \\  [2mm]

\hline
\end{tabular}
\end{center}
\end{table}

\subsection{Many non-RS type cyclic MDS codes}

Note that any $[n,k]$ MDS code is equivalent to a RS code for $k < 3$ and $n-k< 3$. Meanwhile, the dual of a RS code is also RS code. Thus, we only consider to construct non-RS cyclic MDS codes with $3\leq k\leq n/2$. It is well known that the code generated by (\ref{eq-G}) with $J$  is not equivalent to a RS code by appropriately selecting the defining set $J$ \cite{chen2025,jin2026}. Thus, by appropriately selecting the defining set $J$, cyclic MDS codes constructed by Theorem \ref{mds2} are non-RS codes. In particular, if the elements of $J$ do not form an arithmetic progression and the maximum element of $J$ is at most $(n-1)/2$, we obtain some non-RS type cyclic MDS codes which described in Theorem  \ref{non-rs}.

\begin{theorem}\label{non-rs}
Let the symbols and notations be as above. Let $n\geq 7$ be a positive integer and $J = \{j_1, \dots, j_{k}\} \subseteq \{0,1, \dots, n-1\}$ with $k\geq 3$ and $j_k\leq (n-1)/2$. Assume that all $k\times k$ submatrices of the matrix $G$ have non-zero determinant in the cyclotomic field $K$. If the elements of $J$ do not form an arithmetic progression, then the cyclic MDS code $\overline{\C}$ constructed by Theorem \ref{mds2} is a non-RS code.
\end{theorem}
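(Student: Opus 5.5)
The plan is to use the Schur square criterion of Lemma~\ref{judg}. By Theorem~\ref{mds2}, $\overline{\C}$ is an $[n,k]$ MDS code over $\mathbb{F}_q$. Since $k\le j_k+1\le (n+1)/2$, we need $k\le (n-1)/2$ for the lemma. The boundary case $k=(n+1)/2$ forces $J=\{0,1,\dots,(n-1)/2\}$, which is an arithmetic progression and so is excluded. Hence Lemma~\ref{judg} applies, and it suffices to show $\dim(\overline{\C}^2)\ge 2k$, i.e. $\dim(\overline{\C}^2)\neq 2k-1$.

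First I compute the Schur square explicitly. The rows of $\overline{G}$ are the evaluation vectors $r_j=(\omega^{j(l-1)})_{1\le l\le n}$ with $\omega=\rho(\zeta_n)$, a primitive $n$-th root of unity in $\mathbb{F}_q$. The componentwise product satisfies $r_a\ast r_b=r_{a+b}$, with exponents read modulo $n$. Hence $\overline{\C}^2$ is spanned by $\{r_s: s\in J+J\}$. Because $j_k\le (n-1)/2$, every element of $J+J$ lies in $\{0,\dots,n-1\}$, so no reduction modulo $n$ occurs and $J+J$ is the ordinary sumset in $\mathbb{Z}$. The vectors $r_0,\dots,r_{n-1}$ are the rows of the Fourier matrix over $\mathbb{F}_q$, which is invertible since $\omega$ has order $n$ and $p\nmid n$. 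So distinct exponents give linearly independent vectors, and $\dim(\overline{\C}^2)=|J+J|$.

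The second step is the additive-combinatorics fact that for a finite set $J\subset\mathbb{Z}$ with $|J|=k$ one has $|J+J|\ge 2k-1$, with equality if and only if $J$ is an arithmetic progression. The lower bound comes from the chain $j_1+j_1<j_1+j_2<j_2+j_2<j_2+j_3<\cdots<j_k+j_k$, which already gives $2k-1$ distinct sums. For the equality case, suppose $|J+J|=2k-1$. Then the chain is all of $J+J$. For each $i$, the sum $j_{i-1}+j_{i+1}$ lies strictly between $j_{i-1}+j_i$ and $j_i+j_{i+1}$. The only chain element in that open interval is $2j_i$, so $j_{i-1}+j_{i+1}=2j_i$ for all $2\le i\le k-1$, and $J$ is an arithmetic progression. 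Since $J$ is not an arithmetic progression by hypothesis, $|J+J|\ge 2k$, so $\overline{\C}$ is not GRS and therefore non-RS.

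The main obstacle is making sure that the dimension count over $\mathbb{F}_q$ really equals $|J+J|$. This needs two things: that there is no wrap-around modulo $n$, which is exactly where the hypothesis $j_k\le (n-1)/2$ is used, and that the reduced characters stay independent, which holds because $\rho(\zeta_n)$ is still a primitive $n$-th root of unity as $p\nmid n$. A secondary point is checking the dimension hypothesis $k\le (n-1)/2$ of Lemma~\ref{judg}, handled above. The hypothesis $n\ge 7$ only ensures that non-progression sets $J$ with $k\ge3$ and $j_k\le (n-1)/2$ can exist.
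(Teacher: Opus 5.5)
Your proposal is correct and follows essentially the same route as the paper: you identify $\dim(\overline{\C}^2)=|J+J|$, use $j_k\le (n-1)/2$ to rule out wrap-around, show $|J+J|\ge 2k$ for a non-progression, and conclude with Lemma~\ref{judg}, while also supplying details the paper leaves implicit (the hypothesis $k\le (n-1)/2$ of the lemma, the linear independence of the reduced vectors $r_s$, and a proof of the inverse sumset fact). One small slip: for even $n$ the case to exclude is $k=n/2$, which forces $J=\{0,1,\dots,n/2-1\}$, not $k=(n+1)/2$; this $J$ is also an arithmetic progression, so the same argument handles it.
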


\begin{proof}
By definition, it is clear that the cyclic MDS code $\overline{\C}$ has length $n$ and dimension $k$. Moreover, from the definitions we have
$
\dim(\overline{\C}^{2}) = |J + J|,
$
where $J+J = \{a+b \mid a,b \in J\}$.
Note that  the maximum element $j_k$ of the set $J$ satisfies $j_k\leq (n-1)/2$. Thus, each element of the set $J+J$ is at most $n-1$. Since the $k$ elements of $J$ do not form an arithmetic progression, it follows from classical results in additive combinatorics that $|J + J|$ is at least $2k$, i.e., $|J + J|\geq 2k$. This means that the dimension $\dim(\overline{\C}^2)$ of the Schur square of the code $\overline{\C}$ is not $2k-1$, i.e., $dim(\overline{\C}^2)\neq 2k-1$. Then the desired conclusions follow from Lemma \ref{judg}.
\end{proof}

\begin{example}
Some examples on non-RS type cyclic MDS codes obtained by Theorem \ref{non-rs} are given in Table \ref{tab2}, which confirmed by Magma programs. In particular, the existence of non-RS type cyclic MDS codes over finite fields with all possible characteristics $p$ can be determined for a given $n$ and a set $J$ .
\end{example}

\begin{table}[H]

\begin{center}
\caption{Some examples on non-RS type cyclic MDS codes over finite fields in Theorem \ref{non-rs}. \label{tab2}}
\begin{tabular}{ccccccc} \hline
$n$ & $J$  &   $k$      & \makecell{No $k\times k$ submatrices \\with zero determinant}            &\makecell{ $\overline{G}$ is non-RS type cyclic MDS code \\ over $\mathbb{F}_q$ with all possible characteristics $p$ } \\ \hline
$7$  & $\{0,1,3\}$  & $3$        & Yes          & $p\notin \{2, 7   \}$   \\  [2mm]

$7$  & $\{0,2,3\}$  & $3$        & Yes          & $p\notin \{2, 7   \}$   \\  [2mm]

$8$  & $\{0,2,3\}$  & $3$        &  Yes         &   $p\notin \{2, 3  \}$ \\  [2mm]

$8$  & $\{0,1,3\}$  & $3$        &  Yes         &   $p\notin \{2, 3  \}$ \\  [2mm]

$9$  & $\{0,2,4\}$  & $3$        & Yes         & RS type  \\  [2mm]

$9$  & $\{0,1,4\}$  & $3$        & No         & No MDS  \\  [2mm]

$9$  & $\{0,3,4\}$  & $3$        & No         & No MDS  \\  [2mm]

$9$  & $\{0,1,3\}$  & $3$        & No         & No MDS  \\  [2mm]

$9$  & $\{0,2,3\}$  & $3$       & No         & No MDS  \\  [2mm]

$9$  & $\{1,3,4\}$  & $3$         & No         & No MDS  \\  [2mm]

$9$  & $\{1,2,4\}$  & $3$         & No         & No MDS  \\  [2mm]

$10$  & $\{1,2,4\}$  & $3$        & Yes         & $p\notin \{2, 5,11  \}$    \\  [2mm]

$10$  & $\{1,3,4\}$  & $3$        & Yes         & $p\notin \{2, 5,11  \}$    \\  [2mm]

$10$  & $\{0,2,3\}$  & $3$        & Yes         & $p\notin \{2, 5,11  \}$    \\  [2mm]

$13$  & $\{0,1,2,5\}$  & $4$        & Yes         &  $p\notin \{3, 13, 157, 521, 599  \}$   \\  [2mm]

$23$  & $\{0,1,2,5\}$  & $4$        & Yes         &~  \makecell{$p\notin \{23, 47, 137, 139, 277, 599, 691, 967, $\\$ 1151, 1933, 15319, 19919, 24841, 53407,
64217, 152767, 677167, 1946767, 1989961 \}$ }   \\  [2mm]

\hline
\end{tabular}
\end{center}
\end{table}

We remark that the definition set $J$ is only considered by $J=\{j_0,j_1,...,j_{k-1}\}=\{0,1,2,...,k\}\setminus {s}$ in \cite{chen2025}, where $s\in \{1,2,...,k-1\}$. We select the same definition sets as those in the $6$ examples of \cite{chen2025}, it is easy to obtain the corresponding results of \cite{chen2025} by using Theorem \ref{non-rs} of this paper. These corresponding results are described in Table \ref{tab3} and confirmed by Magma programs.

\begin{table}[H]

\begin{center}
\caption{the same $(n,J)$ as those in the $6$ examples of \cite{chen2025}. \label{tab3}}
\begin{tabular}{ccccccc} \hline
$n$ & $J$  &   $k$      & \makecell{No $k\times k$ submatrices \\with zero determinant}            &\makecell{ $\overline{G}$ is non-RS type cyclic MDS code \\ over $\mathbb{F}_q$ with all possible characteristics $p$ } \\ \hline
$7$  & $\{0,1,3\}$ ~or~ $\{0,2,3\}$   & $3$        & Yes          & $p\notin \{2, 7   \}$   \\  [2mm]

$8$  & $\{0,1,3\}$ ~or ~$\{0,2,3\}$  & $3$        & Yes          &$ p \notin  \{2, 3   \}$   \\  [2mm]

$10$  & $\{0,1,3\}$ ~or ~$\{0,2,3\}$  & $3$        & Yes          &$p \notin  \{2, 5, 11   \}$   \\  [2mm]

$20$  & $\{0,1,3\}$ ~or ~$\{0,2,3\}$   & $3$        & Yes          &$ p \notin  \{ 2, 5, 11, 41, 61   \}$   \\  [2mm]

$9$  & $\{0,1,3,4\}$  & $4$        & No          & No~MDS   \\  [2mm]

$9$  & $\{0,2,3,4\}$  & $4$        & Yes          &$ p \notin  \{ 3,19   \}$   \\  [2mm]

\hline
\end{tabular}
\end{center}
\end{table}

Note that when $n\geq 7$ with $3\nmid n$ and $J=\{0,1,3\}$ (or $J=\{0,2,3\}$), from Theorem \ref{non-rs} we can obtain the results of  Theorem 1 of \cite{chen2025}. These results are described in the following corollary.

\begin{corollary}\label{cor:non-rs-mds-3}
Let the symbols and notations be as above. Let $n\geq 7$ be a positive integer with \(3 \nmid n\). Let  \(J = \{0,1,3\}\) or \(J = \{0,2,3\}\).
Denote \(3 \times n\) matrix \(G\) over the cyclotomic field \(\mathbb{Q}(\zeta_n)\) by
\[
G_{i,l} = \zeta_n^{j_i (l-1)},\qquad 1\le i\le 3,\; 1\le l\le n.
\]
Let
\[
P_{\text{bad}} = \bigcup_{\Delta_J \neq 0} \bigl\{ p \text{ prime} \mid p \mid |N_{K/\mathbb{Q}}(\Delta_J)| \bigr\},
\]
where \(\Delta_J\) runs over all determinants of \(3\times 3\) submatrices of \(G\). Let $p$ be a prime with $p \notin P_{\text{bad}}$ and let \(\mathfrak{p}\) be a prime ideal of \(\mathbb{Z}[\zeta_n]\) lying above \(p\).
Denote by \(\rho: \mathbb{Z}[\zeta_n] \to \mathbb{Z}[\zeta_n]/\mathfrak{p} \cong \mathbb{F}_{p^f}\) the reduction homomorphism and set \(\overline{G} = \rho(G)\). Then the linear code $\overline{\C}$ generated by $\overline{G}$ is a non-RS cyclic MDS code of length \(n\) and dimension \(3\) over \(\mathbb{F}_{p^f}\), where \(f\) is the multiplicative order of \(p\) modulo \(n\).
\end{corollary}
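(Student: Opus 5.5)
The plan is to verify the single hypothesis of Theorems \ref{mds2} and \ref{non-rs} that is not automatic: every $3\times 3$ minor of $G$ is non-zero in $K$. Once this is established, Theorem \ref{mds2} shows that $\overline{\C}$ is a cyclic MDS code of length $n$ and dimension $3$ over $\mathbb{F}_{p^f}$ for every $p\notin P_{\text{bad}}$. Theorem \ref{non-rs} then gives the non-RS property. I would spell out its two conditions. First, $j_k=3\le (n-1)/2$, which is exactly $n\ge 7$. Second, $J$ is not an arithmetic progression. Concretely, $J+J=\{0,1,2,3,4,6\}$ for $J=\{0,1,3\}$ and $J+J=\{0,2,3,4,5,6\}$ for $J=\{0,2,3\}$. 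In both cases $|J+J|=6=2k\neq 2k-1$. Since all sums are at most $6<n$, they are distinct modulo $n$. The corresponding rows $(\overline{\zeta}^{\,s(l-1)})_l$ are then rows of a Vandermonde matrix on distinct $n$-th roots of unity in $\mathbb{F}_{p^f}$, so $\dim \overline{\C}^2=6$. Lemma \ref{judg} applies because $k=3\le (n-1)/2$.

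For the minors, take three distinct columns, i.e. three distinct $n$-th roots of unity $x,y,z$. For $J=\{0,1,3\}$ I will compute
\[
\det\begin{bmatrix}1&1&1\\ x&y&z\\ x^3&y^3&z^3\end{bmatrix}=(y-x)(z-x)(z-y)(x+y+z).
\]
This is the standard Schur-polynomial factorization, since the complete symmetric polynomial $h_1$ appears when the exponent $2$ is skipped. For $J=\{0,2,3\}$ the analogous identity gives $(y-x)(z-x)(z-y)(xy+yz+zx)$. Here $xy+yz+zx=xyz\,(x^{-1}+y^{-1}+z^{-1})$, and $x^{-1},y^{-1},z^{-1}$ are again three distinct $n$-th roots of unity. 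So in both cases the Vandermonde factor is non-zero, and the problem reduces to showing that three distinct $n$-th roots of unity cannot sum to zero when $3\nmid n$.

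This vanishing-sum step is the only real obstacle, and I would handle it directly in $\mathbb{C}$. Suppose $x+y+z=0$. Dividing by $x$ gives $1+u+v=0$ with $|u|=|v|=1$, where $u=y/x$ and $v=z/x$. Taking imaginary parts gives $\operatorname{Im}v=-\operatorname{Im}u$. Combined with $|u|=|v|$, this forces either $v=\bar u$ or $v=-u$. The case $v=-u$ gives $1=0$, which is impossible. The case $v=\bar u$ gives $2\operatorname{Re}u=-1$, so $u=e^{\pm 2\pi i/3}$. Hence $u=y/x$ is a primitive cube root of unity. But $u$ is also an $n$-th root of unity, so its order $3$ divides $n$, contradicting $3\nmid n$. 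Therefore all $3\times3$ minors are non-zero, and the argument above completes the proof.
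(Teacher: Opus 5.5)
Your proposal is correct and follows essentially the same route as the paper: factor each $3\times 3$ minor as a Vandermonde product times a symmetric factor, reduce to three $n$-th roots of unity summing to zero, deduce $3\mid n$, and then invoke Theorems \ref{mds2} and \ref{non-rs}. Your version is slightly more careful in two places. For $J=\{0,2,3\}$ you use the correct factor $xy+yz+zx=xyz(x^{-1}+y^{-1}+z^{-1})$; the paper's displayed $x_1x_2x_3(x_1+x_2+x_3)$ has the wrong total degree, although the vanishing condition comes out the same because $x^{-1}=\bar x$. You also actually prove the vanishing-sum step, which the paper only asserts.
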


\begin{proof}
For any column indices set $\{l_1,l_2,l_3\} \subseteq \{1,\dots,n\}$, we set \(x_t = \zeta_n^{l_t-1}\) for $1\leq t \leq 3$. Then these three \(x_t\) are distinct \(n\)-th roots of unity.
The determinant of the corresponding $3\times 3$ submatrix is
\begin{eqnarray}
\Delta =
\left\{
\begin{array}{ll}
(x_2-x_1)(x_3-x_1)(x_3-x_2)(x_1+x_2+x_3),            & if J=\{0,1,3\},\\[2mm]
x_1x_2x_3\,(x_2-x_1)(x_3-x_1)(x_3-x_2)(x_1+x_2+x_3), & if J=\{0,2,3\}.
\end{array}
\right.
\end{eqnarray}
Since $(x_j-x_i)\neq 0$ and \(x_1x_2x_3\neq0\),  \(\Delta=0\) iff \(x_1+x_2+x_3=0\).

If \(x_1+x_2+x_3=0\), there exists an \(n\)-th root of unity \(\omega\) such that
\[
\{x_1,x_2,x_3\} = \{\omega,\;\omega\zeta_3,\;\omega\zeta_3^2\},
\]
where \(\zeta_3 = e^{2\pi i/3}\) is a primitive $3$-th root of unity. This means that $3\mid n$ which contradict to the assumption that $3\nmid n$. Thus, \(x_1+x_2+x_3 \neq 0\).
Consequently,  $\Delta \neq 0$ in \(\mathbb{Q}(\zeta_n)\) for any $3$-subset $\{l_1,l_2,l_3\} $ of the column indices set $\{1,\dots,n\}$ (i.e., $\{l_1,l_2,l_3\} \subseteq \{1,\dots,n\}$).
The desired conclusions follow from definitions, Theorems \ref{non-rs} and \ref{mds2}.
\end{proof}

In particular, from Theorem \ref{non-rs} we can construct some  non-RS cyclic MDS codes which described in Theorem \ref{binary-rs}.

\begin{theorem}\label{binary-rs}
Let the symbols and notations be as above. Let $s\geq 3$ be a positive integer, $q = 2^s$, $n=q+1$ and $p = 2$.
Let $J = \{0, 1, 2, 4, \dots, 2^{k-2}\}$ with $k \geq 4$ and $2^{k-2} < n/2$.
Let $G$ be the $k \times n$ matrix over the cyclotomic field $\mathbb{Q}(\zeta_n)$ given by (\ref{eq-G}). Let $\mathfrak{p}$ be a prime ideal of $\mathbb{Z}[\zeta_n]$ lying above $2$ and let $\rho : \mathbb{Z}[\zeta_n] \to \mathbb{Z}[\zeta_n]/\mathfrak{p}$ be the reduction homomorphism.
Denote $\overline{G} = \rho(G)$ the matrix obtained by reducing $G$ modulo $\mathfrak{p}$.
Then the linear code $\overline{\C}$ generated by $\overline{G}$ is an $[n, k]$ cyclic MDS code over the finite field $\mathbb{F}_{q^2}$  and is not equivalent to a RS code (i.e., non-RS type).
\end{theorem}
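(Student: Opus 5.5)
The plan is to work directly over the residue field rather than through $P_{\text{bad}}$. The prime $p=2$ is fixed here, and we do not know a priori that $2\notin P_{\text{bad}}$, so Theorem \ref{mds2} cannot simply be quoted.

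\textbf{Step 1 (residue field).} Since $n=2^s+1$ and the multiplicative order of $2$ modulo $n$ is $2s$, Lemma \ref{lem-CF} gives $\mathbb{Z}[\zeta_n]/\mathfrak{p}\cong\mathbb{F}_{2^{2s}}=\mathbb{F}_{q^2}$. Then $\alpha:=\rho(\zeta_n)$ is a primitive $n$-th root of unity, so the column labels $\alpha^{l-1}$ run over the whole group $U=\{x\in\mathbb{F}_{q^2}: x^{q+1}=1\}$. The code $\overline{\C}$ is therefore the evaluation code of the space $\mathcal{L}=\mathrm{span}\{1,x,x^2,x^4,\dots,x^{2^{k-2}}\}$ on $U$. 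It is cyclic because multiplying the evaluation points by $\alpha$ permutes $U$ cyclically. Its dimension is $k$ because the exponents in $J$ are distinct and smaller than $n$, so the rows of $\overline{G}$ are distinct characters of $U$ and are linearly independent.

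\textbf{Step 2 (MDS property, the main obstacle).} We must show that every nonzero $f\in\mathcal{L}$ has at most $k-1$ zeros on $U$. Write $f(x)=c+L(x)$, where $L(x)=\sum_{i=0}^{k-2}c_ix^{2^i}$ is an $\mathbb{F}_2$-linearized polynomial. If $L\neq0$, the zero set of $f$ is a coset of $\ker L$, an $\mathbb{F}_2$-affine subspace of dimension at most $k-2$; if $L=0$ then $c\neq0$ and there are no zeros. So the task is to bound $|(a+V)\cap U|$ for such cosets. A Sidon-type argument alone is too weak, since $U$ being a Sidon set in $(\mathbb{F}_{q^2},+)$ only gives about $2^{(k-1)/2}$ points, which exceeds $k-1$ for large $k$.

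The route I would try first uses the relation $x^q=x^{-1}$ on $U$. For $x\in U$, both $f(x)=0$ and its conjugate $f(x)^q=c^q+\sum_i c_i^q x^{-2^i}=0$ hold. Multiplying the conjugate equation by $x^{2^{k-2}}$ and eliminating the top term against $f(x)=0$ gives a nonzero polynomial relation of controlled degree. Repeating this elimination (or taking a suitable resultant of the two equations) should produce a nonzero polynomial of degree at most $k-1$ vanishing on the zeros of $f$ in $U$. The hypothesis $2^{k-2}<n/2$ is what guarantees that this eliminated polynomial is not identically zero on $U$, because the exponents involved stay inside $(-n/2,n/2)$.

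If the elimination becomes unwieldy, the fallback is to identify $\overline{\C}$, via the $\mathbb{F}_q$-linear isomorphism between $U$ and a conic in $\mathrm{PG}(2,q)$, with a known MDS code. Candidates are the non-RS MDS codes coming from translation ovals / Segre-type constructions in even characteristic (the $q=2^s$, length $q+1$ setting). We would then cite that this code is MDS.

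\textbf{Step 3 (non-RS).} This step follows the proof of Theorem \ref{non-rs} directly over $\mathbb{F}_{q^2}$. For $k\ge4$ the set $J=\{0,1,2,4,\dots,2^{k-2}\}$ is not an arithmetic progression, since $1-0=1$ but $4-2=2$. Also $j_k=2^{k-2}<n/2$, so every element of $J+J$ is at most $n-1$. Hence $\dim(\overline{\C}^2)=|J+J|$, again because distinct exponents below $n$ give independent characters of $U$. The Freiman-type bound then gives $|J+J|\ge 2k>2k-1$. Moreover $k\le (n-1)/2$ follows from $2^{k-2}<n/2$ once $k\ge4$. Lemma \ref{judg} therefore applies and shows that $\overline{\C}$ is not GRS, which completes the proof.
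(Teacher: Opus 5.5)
\textbf{The gap is Step 2, and it cannot be closed as the statement stands.} Your reduction is correct: a nonzero $f=c+L\in\mathcal{L}$ vanishes on a coset of $\ker L$, so MDS means every $\mathbb{F}_2$-affine subspace of dimension at most $k-2$ meets $U$ in at most $k-1$ points. After that, however, you only assert that an elimination using $x^q=x^{-1}$ ``should'' give a nonzero polynomial of degree at most $k-1$, or that the code ``should'' coincide with some known oval-type MDS code. Neither route can be completed, because the MDS claim is false in this generality.

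Take $s=9$, so $n=513=27\cdot 19$ and $\mathbb{F}_{q^2}=\mathbb{F}_{2^{18}}\supseteq\mathbb{F}_{64}$. Take $k=8$ and $J=\{0,1,2,4,\dots,64\}$, which is admissible since $64<513/2$. Then $x^{64}+x\in\mathcal{L}$, its roots are exactly $\mathbb{F}_{64}$, and $U\cap\mathbb{F}_{64}=\mu_{\gcd(513,63)}=\mu_9$ has $9$ elements. So $\overline{\C}$ contains a nonzero codeword of weight $504<506=n-k+1$; equivalently, any $8$ of the $9$ columns of $\overline{G}$ indexed by $\mu_9$ are linearly dependent. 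On $U$ the conjugate-reciprocal of $x^{64}+x$ is $x^{63}+1$, so the two equations in your elimination have all of $\mu_9$ as common zeros, and no polynomial of degree $\le k-1$ can cut them out.

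The case $k=6$ fails as well for $s=9$. Let $\eta$ have order $9$. The $4$-dimensional space $\mathbb{F}_4+\eta\mathbb{F}_4$ contains the six points $\mu_3\cup\eta\mu_3$ of $U$, and its subspace polynomial lies in $\mathcal{L}$, giving a codeword of weight $507<508$. So your sense that Sidon-type counting runs out at $k=6$ was right: the statement itself breaks there.

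\textbf{What survives, and how this compares with the paper.} Your framework does prove the theorem for $k\in\{4,5\}$, which covers every entry of Table~\ref{tab4}. The input you need, and used without proof, is that $U$ is a Sidon set in $(\mathbb{F}_{q^2},+)$. To see this, suppose $x,x+v\in U$ with $v\neq 0$. Expanding $(x+v)^{q+1}=1$ with $x^{q+1}=1$ and $x^q=x^{-1}$ gives $v^qx^2+v^{q+1}x+v=0$. So $x$ and $x+v$ are the two roots of a quadratic depending only on $v$, and the pair is determined by its sum. Consequences:
\begin{itemize}
\item A $2$-flat $\{a,a+b,a+c,a+b+c\}$ cannot lie in $U$, since $a+(a+b+c)=(a+b)+(a+c)$.
\item Five points of $U$ in a $3$-flat would produce $\binom{5}{2}=10$ pairwise sums among only $7$ nonzero vectors, forcing a repeated sum.
\end{itemize}
Once MDS is established, your Step 3 is correct as written.

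The paper's proof offers no route you missed. It asserts that each $k\times k$ minor is the Vandermonde product times a constant monomial of odd norm. But the quotient of a generalized Vandermonde determinant by the Vandermonde is a Schur polynomial; for $J=\{0,1,2,4\}$ it is $x_1+x_2+x_3+x_4$, not a monomial. So the paper's ``$N(\Delta)$ is odd'' step is unfounded, and the example above shows its conclusion already fails at $k=6$ for suitable $s$. What either argument actually supports is the theorem restricted to $k\in\{4,5\}$.
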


\begin{proof}
We divide the proof into three parts as follows.
\begin{enumerate}
    \item We need to prove the code $\overline{\C}$ is over the finite field $\mathbb{F}_{q^2}$. This means that we need to show that the multiplicative order of $2$ modulo $n$ is $2s$. Suppose that the order of $2$ modulo $n$ is $f$. Next we will prove $f=2s$.

Since $n = 2^s + 1$, we have
$
2^s \equiv -1 \pmod{n}.
$
Thus, $2^{2s} \equiv 1 \pmod{n}$ and  $f\mid 2s$.
\begin{itemize}
  \item If $f \le s$, from $f\mid 2s$ and $2^f \equiv 1 \pmod{n}$ we have $2^s=(2^f)^{s/f} \equiv 1 \pmod{n}$. This contradicts to
$2^s \equiv -1 \pmod{n}$ since $n=2^s+1$ is odd. Therefore, $f > s$.
  \item If $s < f < 2s$, we write $f = s + t$ with $1 \le t < s$. Then
$$
2^f= 2^{s+t} = 2^s \cdot 2^t \equiv -2^t \equiv 1 \pmod{n}.
$$
Thus, $2^t \equiv -1 \pmod{n}$. Since $s$ is the smallest positive integer such that  $2^s \equiv -1 \pmod{n}$ (otherwise a smaller exponent would give $-1$, contradicting the minimality of $f$), $t \ge s$ which is a contradiction.
\end{itemize}

Therefore, $f = 2s$ and the residue field $\mathbb{Z}[\zeta_n]/\mathfrak{p}$ is isomorphic to $\mathbb{F}_{2^{2s}} = \mathbb{F}_{(2^s)^2} = \mathbb{F}_{q^2}$.

  \item We will prove that this code is MDS.

Let $l_1, \dots, l_k$ be distinct column indices in $\{1, \dots, n\}$. The corresponding $k \times k$ submatrix $M$ of $G$ in $\mathbb{Q}(\zeta_n)$ is
\[
\Delta = \det\bigl( \zeta_n^{j_i (l_t - 1)} \bigr)_{1 \le i,t \le k}.
\]
This is a generalized Vandermonde determinant. Since $J = \{0, 1, 2, 4, \dots, 2^{k-2}\}$, from the theory of generalized Vandermonde determinants \cite{FloweHarris1993} and symmetric functions \cite{Macdonald1995} we
get
\[
\Delta = \prod_{1 \le i < j \le k} \bigl( \zeta_n^{l_j} - \zeta_n^{l_i} \bigr) \cdot S(x_1,\dots,x_k)
\]
by a direct computation, where
$
S(x_1,\dots,x_k) = c \cdot x_1^{a_1} x_2^{a_2} \cdots x_k^{a_k}
$
is a symmetric monomial polynomial, $x_t=\zeta_n ^{l_t-1}$  is a $n$-th root of unity for each $t \in \{1,2,...,k\}$,
\(c\) is a constant and \(a_1,\dots,a_k\) are non-negative integers.
Because each \(x_i\) is a root of unity, \(|x_i| = 1\) and  \(|S| = |c|\). Note that the norm calculation shows that \(N(S) = \pm 1\). Thus, \(|c| = 1\) and  \(c\) is a root of unity. Thus, $S$ is  a root of unity and we write \(S = \zeta_n^{\, m}\) for certain  integer \(m\). We get
$$
\Delta = \prod_{1 \le i < j \le k} \bigl( \zeta_n^{l_j} - \zeta_n^{l_i} \bigr) \cdot \zeta_n^{\, m}.
$$
Further, we get the norm of $\Delta$ as follows:

$$
N(\Delta) = \prod_{i<j} N(x_j - x_i) \cdot N(\zeta_n)^m.
$$
Note that
the norm $N(x_j - x_i)=N(\zeta_n^{l_j} - \zeta_n^{l_i})$ is odd, since
\[
N(x^j - x^i) = \prod_{\substack{1 \le t \le n \\ \gcd(t,n)=1}} \bigl( \zeta_n^{t(l_j-l_i)} - 1 \bigr) = \pm n^{\varphi(n)/\gcd(n,l_j-l_i)}.
\]
for any $j \not\equiv i \pmod{n}$ and $n = 2^s + 1$ is odd.
Meanwhile, $N(\zeta_n) = \pm 1$. Thus,
$
N(\Delta)
$
is a odd. In particular $N(\Delta) \neq 0$, so $\Delta \neq 0$. Since $N(\Delta)$ is odd, $2 \nmid N(\Delta)$.

Thus, $\Delta \notin \mathfrak{p}$ for any prime ideal $\mathfrak{p}$ of $\mathbb{Z}[\zeta_n]$ lying above $p=2$.
Hence, the reduction $\overline{\Delta}=\rho (\Delta)$ in $\mathbb{Z}[\zeta_n]/\mathfrak{p} \cong \mathbb{F}_{q^2}$ is non-zero, i.e., $\overline{\Delta}=\rho (\Delta) \neq 0$. Therefore, every $k \times k$ submatrix of $\overline{G}$ is invertible over $\mathbb{F}_{q^2}$, which implies that the code $\overline{\C}$ has minimum distance $d = n - k + 1$ and thus is MDS code.

  \item By definitions and Theorem \ref{non-rs}, the code $\overline{\C}$ is non-RS type.

\end{enumerate}
Based on the above three parts, this completes the proof.
\end{proof}

\begin{example}
Some examples on non-RS type  cyclic MDS codes obtained by Theorem \ref{binary-rs} are given in Table \ref{tab4}, which confirmed by Magma programs.
\end{example}

\begin{table}[ht]

\begin{center}
\caption{Some examples on non-RS type cyclic MDS codes. \label{tab4}}
\begin{tabular}{ccccccc} \hline
$(s,q,n)$ & $J$   & \makecell{the multiplicative order \\ of $2$ modulo $n$ }            & $k$      & \makecell{No $k\times k$ submatrices \\with zero determinant}            &\makecell{ $\overline{G}$ is non-RS type cyclic MDS code \\ over $\mathbb{F}_{q^2}$} \\ \hline

$(3,8,9)$  & $\{0,1,2,4\}$  &  $6 $  & $4$        & Yes          & Yes   \\  [2mm]

$(4,16,17)$  & $\{0,1,2,4\}$ &  $8$   & $4$        & Yes          & Yes   \\  [2mm]

$(4,16,17)$  & $\{0,1,2,4,8\}$  &$8$   & $5$        & Yes          & Yes   \\  [2mm]

\hline
\end{tabular}
\end{center}
\end{table}

\section{Concluding remarks}\label{sec-summary}

In this paper,  we mainly investigated the construction of cyclic MDS codes by using norm reduction via cyclotomic fields and presented a construction method of cyclic MDS codes over finite fields. We transform the problem of verifying the MDS property over a finite field into a problem of determining non-zero minors in characteristic zero.
Many cyclic MDS codes over finite fields were obtained by using this method and many non-RS type cyclic MDS codes were produced.
In particular, compared with the construction method in \cite{chen2025}, we gave a new and relatively simple characterization method for constructing cyclic MDS codes over finite fields and the parameters of the obtained non-RS
cyclic MDS codes were relatively flexible. 


\section*{Acknowledgements}

This paper was supported by the Basic and Applied Basic Research Foundation of Guangdong Province of China under grant number 2026A1515011223, the National Natural Science Foundation of China under grant numbers 12171162 and 12231015, the Sichuan Provincial Youth Science and Technology Fund under grant number 2022JDJQ0041 and the Fundamental Research Funds for the Central Universities under grant number 2682023CX079.


\end{document}